\theoremstyle{plain}
\newtheorem{theorem}{Theorem}[section]
\newtheorem{lemma}[theorem]{Lemma}
\newtheorem{proposition}[theorem]{Proposition}
\newtheorem{corollary}[theorem]{Corollary}
\theoremstyle{definition}
\newtheorem{example}[theorem]{Example}
\newtheorem{remark}[theorem]{Remark}
\numberwithin{equation}{section}
\newtheorem{Assumption}{Assumption}
\newcommand{\calF}{\mathcal{F}}
\newcommand{\calP}{\mathcal{P}}
\newcommand{\calA}{\mathcal{A}}
\newcommand{\Y}{\mathcal{Y}}
\newcommand{\calX}{\mathcal{X}}
\newcommand{\abs}[1]{\left\vert#1\right\vert}
\newcommand{\set}[1]{\left\{#1\right\}}
\newcommand{\calI}{\mathcal{I}}
\newcommand{\eps}{\varepsilon}
\newcommand{\R}{\mathds{R}}
\newcommand{\Na}{\mathds{N}}
\newcommand{\Exp}{\mathds{E}}
\newcommand{\Prob}{\mathbf{P}}
\newcommand{\B}{\mathcal{B}}
\newcommand{\A}{\mathcal{A}}
\newcommand{\Fil}{\mathds{F}}
\newcommand{\Qbf}{\mathbf{Q}}
\title{Martingale approach to optimal portfolio-consumption problems in Markov-modulated pure-jump models}
\author{\large Oscar L\'opez\thanks{oscar.lopez@urosario.edu.co} }
\author{\large Rafael Serrano\thanks{rafael.serrano@urosario.edu.co} \thanks{The authors gratefully acknowledge the financial support of FIUR research project  DVG170}}
\affil{\small Universidad del Rosario\\
Calle 12C No. 4-69\\
Bogot\'a, Colombia}
\begin{document}

\bibliographystyle{amsplain}
\maketitle

\begin{abstract}
  We study optimal investment strategies that maximize expected utility from consumption and terminal wealth in a pure-jump asset price model with Markov-modulated (regime switching) jump-size distributions. We give sufficient conditions for existence of optimal policies and find closed-form expressions for the optimal value function for agents with logarithmic and fractional power (CRRA) utility in the case of two-state Markov chains. The main tools are convex duality techniques, stochastic calculus for pure-jump processes and explicit formulae for the moments of telegraph processes with Markov-modulated random jumps.

\end{abstract}
\section{Introduction}
The object of this paper is to study the problem of maximizing expected utility from consumption and terminal wealth in an incomplete pure-jump asset price model with jump-size distributions modulated by an underlying continuous-time finite-state Markov chain, and totally inaccessible jump times that coincide with the transition times of the Markov chain. This financial market model is an extension of the jump-telegraph model proposed by L\'opez and Ratanov \cite{lopezrat} to the case in which the underlying Markov chain has more than two states. This generalization is mainly motivated by the empirical results of Konikov and Madan \cite{konmadan} that suggest that more than two regimes should be considered.

Markov-modulated regime-switching models have attracted considerable attention in financial modelling in the past 15 years as they allow for time-inhomogeneity in the asset dynamics that capture important features of financial time series such as asymmetric and heavy-tailed asset returns, time-varying conditional volatility and volatility clustering, as well as structural changes in economics conditions.

Our approach to the portfolio-consumption problem is largely based on the martingale approach and convex duality techniques for utility maximization in incomplete markets initiated by He and Pearson \cite{hepearson}, Karatzas et al. \cite{karatzas91}, Cvitani\'{c} and Karatzas \cite{cvikar}, and extended by Kramkov and Schachermayer \cite{kramsch} to the general semi-martingale setting. In the particular case of market models driven by jump-diffusion, Goll and Kallsen \cite{goll2000}, Kallsen \cite{kallsen2000} and more recently Michelbrink and Le \cite{michelbrink}, use the martingale approach to obtain explicit solutions for agents with logarithmic and power utility functions. Callegaro and Vargiolu \cite{callevar} obtain similar results in jump-diffusion models with Poisson-type jumps.

To the best of our knowledge, this is the first paper that uses the martingale approach to study the problem of maximizing utility from consumption and terminal wealth in a pure-jump model with Markov-modulated jumps. Only the optimal investment problem for jump-diffusion models studied  by B\"{a}uerle and Riedler \cite{bauerle2007} seems comparable to the formulation of the problem in the present paper. They use, however, the standard dynamic programming approach and are only able to derive some bounds on the optimal policy. Moreover, they assume that the jump-sizes do not depend on the underlying Markov chain.

The main result of this paper is a sufficient condition for existence of an optimal portfolio-consumption pair. This condition is given in terms of the solution pair of a linear backward SDE with respect to the compensated (random) counting measure associated with the marked point process consisting of the jump times and the corresponding (Markov-modulated) jumps. The coefficients of the backward SDE are related to the state-price densities via the convex conjugate of the utility functions. Although the optimality condition in the main result seems rather restrictive, in the last section we show that it simplifies significantly in the case of logarithmic utility functions.

The key assumption throughout is that the compensator of the counting measure has an (intensity) kernel that is also Markov-modulated, similar to the model proposed recently by Elliott and Siu \cite{elliottsiu}. In fact, using results on jump-telegraph processes with Markov-modulated jumps, we prove that if the underlying Markov chain takes only two values, our market model actually satisfies the main assumption, hence generalizing the model of Elliott and Siu \cite{elliottsiu}.

The outline of the paper is as follows. In Section 2 we describe the stochastic setting and information structure for the market model, introduce the wealth equation and define the optimal investment problem. In Section 3, following arguments similar to Michelbrink and Le \cite{michelbrink}, we formulate and prove the main result of this paper.  In section 4 we present some basic properties of the telegraph process with Markov-modulated jumps and prove that our market model satisfies the main assumption in the case of two regimes. Finally, Section 5 illustrates the main result by considering the special case of agents with logarithmic and fractional power (CRRA) utility. We also present some numerical results for the case of logarithmic utility and, using results from Section 4, we find a closed-form solution for the optimal value function in the case of two regimes.

\section{Market model, wealth equation and portfolio-consumption problem}
In this section we describe the stochastic setting and information structure for the market model  and introduce the wealth process and utility maximization problem from consumption and terminal wealth.

We fix an finite investment horizon $T>0.$ Let $\eps(\cdot)=\{\eps(t)\}_{t\in[0,T]}$ be a continuous-time Markov chain with finite state-space $\mathcal{I}=\{0,1,\ldots,m-1\}.$ Let $\set{\tau_n}_{n\geq 1}$ denote the jump times of the Markov chain $\eps(\cdot),$ and let $\eps_n:=\eps(\tau_n-)$ denote the state of $\eps(\cdot)$ right before the $n$-th jump.

Let $E$ be an Euclidean space with Borel $\sigma$-algebra $\B(E).$ For each $i\in\mathcal{I}$, let $\{Y_{i,n}\}_{n\geq 1}$ denote a sequence of $E$-valued independent random variables with distributions
\[
\Prob(Y_{i,n}\in dy)={F}_i(dy), \ \ n\geq 1.
\]
For each $i\in\mathcal{I},$ the event $\eps(t)=i$ represents that the economy or business cycle is in the $i$-th state at time $t.$ Let $(r_i)_{i\in\calI}$ with $r_i>0$ and $(\mu_i)_{i\in\calI}$ denote the vectors of instantaneous interest rates and stock appreciation rates in each state or regime. The financial market consists of a default-free money-market account with Markov-modulated continuously compounded return rate $\{r_{\eps(t)}\}_{t\in[0,T]},$ that is, its price process $B=\{B_t\}_{t\in[0,T]}$ satisfies
\[
B_t=\exp\left(\int_0^tr_{\eps(s)}ds\right), \ \ t\in[0,T],
\]
and a risky asset or stock with price process $S=\{S_t\}_{t\in[0,T]}$ solution of the equation
\begin{equation}\label{eqS0}
dS_t=S_{t-}\,dX_t
\end{equation}
where $X=\{X_t\}_{t\in[0,T]}$ is the pure-jump process with Markov-modulated jumps
\[
X_t=\int_0^t\mu_{\eps(s)}\,ds+\sum_{\tau_n\le t}f_{\eps_n}(Y_{\eps_n,n}), \ \ t\in[0,T].
\]
For each $i\in\calI,$ $f_i:E\rightarrow (-1,\infty)\setminus\set{0}$ is a measurable map, integrable with respect to the distribution $F_i(dy).$ Throughout, we assume that the distributions $F_i$ are pairwise independent, as well independent of the Markov chain $\eps(\cdot).$

Let $\gamma:\Omega\times \B(E)\otimes\B([0,T])\to\set{1,2,\ldots}$ be the random counting measure associated with the  sequence $\set{(\tau_n,Y_{\eps_n,n})}_{n\geq 1}$ defined as
\[
\gamma(A\times(0,t]):=\sum_{n=1}^\infty\mathbf{1}_{\set{\tau_n\le t,Y_{\eps_n,n}\in A}}=\sum_{\tau_n\le t}\mathbf{1}_{\set{Y_{\eps_n,n}\in A}}, \ \ A\in\mathcal{B}(E), \ \ t\in [0,T]
\]
The random measure $\gamma$ is known as marked point process or multivariate point process with mark space $E,$ see e.g. Jacod and Shiryaev \cite[Chapter III, Definition 1.23]{js} or Jeanblanc et al \cite[Section 8.8]{jeanblanc}.

For each $A\in\B(E),$ the counting process $N_t(A):=\gamma(A\times(0,t])$ counts the number of marks with values in $A$ up to time $t.$ The underlying filtration $\Fil=\set{\calF_t}_{t\in  [0,T]}$ is defined as the filtration generated by these counting processes, augmented with the $\sigma$-algebra $\calF_0$ of $\Prob$-null events,
\[
\calF_t:=\calF_0\vee\sigma(N_s(A):0\le s\le t, \ A\in\B(E)), \ \ t\in  [0,T].
\]
The predictable $\sigma$-algebra $\calP$ on $\Omega\times  [0,T]$ is defined as the $\sigma$-algebra generated by adapted left-continuous processes. A real-valued process $\set{\phi_t}_{t\in  [0,T]}$ is said to be $\mathds{F}$-predictable if the random function $\phi(t,\omega)=\phi_t(\omega)$ is measurable with respect to $\calP$.

Similarly, a map $\phi:\Omega\times  [0,T]\times E\to\R$ is said to be $\Fil$-predictable if it is measurable with respect to the product $\sigma$-algebra $\calP\otimes\B(E).$ For $\phi$ $\Fil$-predictable, we may define the stochastic integral of $\phi$ with respect to the random measure $\gamma(dy,dt)$ as follows
\begin{equation*}
\int_{(0,t]}\int_{E} \phi(s,y)\gamma(dy,ds)
:=\sum_{\tau_n\leq t}\phi(\tau_n,Y_{\eps_n,n})=\sum_{n=1}^{N_t(E)}\phi(\tau_n,Y_{\eps_n,n}), \ \ t\in [0,T].
\end{equation*}
Using this definition, we can rewrite equation (\ref{eqS0})  as
\begin{equation}\label{eqS}
dS_t={S_{t-}}\left[\mu_{\eps(t)}\,dt+\int_{E} f_{\eps(t-)}(y)\,\gamma(dy,dt)\right].
\end{equation}
The solution of this linear equation is given by the process
\begin{align*}
S_t&=S_0\mathcal{E}_t\left(\int_0^\cdot\mu_{\eps(s)}\,ds+\int_0^\cdot\int_E f_{\eps(s-)}(y)\,\gamma(dy,ds) \right)\\
&=S_0\exp\left(\int_0^t\mu_{\eps(s)}\,ds\right)\prod_{n=1}^{N_t(E)}(1+f_{\eps_n}(Y_{\eps_n,n})), \ \ t\in [0,T].
  \end{align*}
Here, $\mathcal{E}_t(\cdot)$ denotes the stochastic (Dol\'{e}ans-Dade) exponential, see e.g. Jeanblanc et al \cite[Section 9.4.3]{jeanblanc}. Moreover, since $f_i(y)>-1,$ the price process $S$ satisfies
\begin{align*}
S_t&=S_0\exp\biggl(\int_0^t\mu_{\eps(s)}\,ds+\sum_{n=1}^{N_t(E)}\ln(1+f_{\eps_n}(Y_{\eps_n,n}))\biggr)\\
&=S_0\exp\left(\int_0^t\mu_{\eps(s)}\,ds+\int_0^t\int_{E}\ln(1+f_{\eps(s-)}(y))\,\gamma(dy,ds)\right), \ \ t\in [0,T].
\end{align*}
 The log-returns of the stock process $S_t$ are then given by the pure-jump process with Markov-modulated random jumps
\[
\ln\Bigl(\frac{S_t}{S_0}\Bigr)=\int_0^t\mu_{\eps(s)}\,ds+\sum_{n=1}^{N_t(E)}\ln(1+f_{\eps_n}(Y_{\eps_n,n})), \ \ t\in [0,T].
\]
For an agent willing to invest in the financial market described above, let $\pi_t$ denote the fraction of wealth invested in the risky asset $S_t$ at time $t,$ so that the fraction of wealth invested in the money account $B_t$ is $1-\pi_t.$ Recall that a positive value for $\pi_t$ represents a long position in the risky asset, whereas a negative $\pi_t$ stands for a short position.

During the time interval $[0,T],$ the investor is allowed to consume at an instantaneous consumption rate $c_t.$ In the following, we consider only portfolio-consumption pairs $(\pi,c)=\set{(\pi_t,c_t)}_{t\in [0,T]}$ that are $\Fil$-predictable, satisfy the integrability condition
\[
\int_0^T(\pi_t^2+c_t)\,dt<+\infty, \ \ \mbox{a.s.},
\]
as well as the so-called \emph{self-financing condition}, that is, for an initial wealth $x>0$ and a portfolio-consumption pair $(\pi,c),$ the wealth $V_t$ at time $t$ of the investor satisfies the stochastic differential equation
\begin{equation}\label{eqV}
\begin{split}
dV_t&=(V_{t-}r_{\eps(t)}-c_t)\,dt+\pi_tV_{t-}\left\{(\mu_{\eps(t)}-r_{\eps(t)})\,dt+\int_{E} f_{\eps(t-)}(y)\,\gamma(dy,dt)\right\},\\
V_0&=x.
\end{split}
\end{equation}
We denote with $V^{x,\pi,c}=\{V_t^{x,\pi,c}\}_{t\in [0,T]}$  the solution to equation (\ref{eqV}). In particular, if there is no consumption i.e. $c_t =0$ for all $t\in[0,T],$ equation (\ref{eqV}) is linear and its solution is given explicitly by
\begin{align}
V_t^{x,\pi,0}&=x\mathcal{E}_t\left(\int_0^\cdot \left[r_{\eps(s)}+\pi_s\left(\mu_{\eps(s)}-r_{\eps(s)}\right)\right]\,ds+\int_0^\cdot\int_E\pi_s f_{\eps(s-)}(y)\,\gamma(dy,ds)\right)\notag\\
&=x\exp\left(\int_0^t\left[\pi_s\mu_{\eps(s)}
+(1-\pi_s)r_{\eps(s)}\right]\,ds\right)\prod_{n=1}^{N_t(E)}(1+\pi_{\tau_n}f_{\eps_n}(Y_{\eps_n,n})).\label{Vpi0}
\end{align}
Notice that this is always positive if, for instance, short-selling is not allowed for any of the assets i.e. if $\pi_t\in [0,1]$ for all $t\in[0,T].$
We can use (\ref{Vpi0}) to find an expression for the wealth process $V^{x,\pi,c}$ in terms of $V^{1,\pi,0}=\{V_t^{1,\pi,0}\}_{t\in [0,T]},$ the wealth process with initial wealth $1$ and portfolio-consumption pair $(\pi,0),$ as follows:  consider the process
\[
\xi_t^{x,\pi,c}:=x-\int_0^t\frac{c_s}{V_{s-}^{1,\pi,0}}\,ds, \ \ t\in [0,T].
\]
In differential form, we have $V_{t-}^{1,\pi,0}\,d\xi_t^{{x,\pi,c}}=-c_t\,dt.$ Then
\begin{align*}
d\left(\xi_t^{{x,\pi,c}}V_t^{1,\pi,0}\right)&=\xi_t^{{x,\pi,c}}\,dV_t^{1,\pi,0}+V_{t-}^{1\pi,0}\,d\xi_t^{{x,\pi,c}}\\
&=\xi_t^{{x,\pi,c}}V_{t-}^{1,\pi,0}\left(\left[r_{\eps(t)}+\pi_t\left(\mu_{\eps(t)}-r_{\eps(t)}\right)\right]\,dt+\pi_t\int_{E} f_{\eps(t-)}(y)\,\gamma(dy,dt)\right)-c_t\,dt.
\end{align*}
Since $\xi_0^{{x,\pi,c}}V_0^{1,\pi,0}=x,$ by uniqueness of solution to equation (\ref{eqV}), the wealth process $V^{x,\pi,c}$ is a modification of
\begin{equation}\label{xiV}
  \begin{split}
&\xi_t^{{x,\pi,c}}V_t^{1,\pi,0}=\\
&\phantom{\gamma t}\biggl[x-\int_0^t\frac{c_s}{V_{s-}^{1,\pi,0}}\,ds\biggr]\exp\left(\int_0^t\left[\pi_s\mu_{\eps(s)}
+(1-\pi_s)r_{\eps(s)}\right]\,ds\right)\prod_{n=1}^{N_t(E)}(1+\pi_{\tau_n}f_{\eps_n}(Y_{\eps_n,n})).
\end{split}
\end{equation}
Notice that the portfolio-consumption pair $(\pi,c)$ leads to positive wealth at time $t\in [0,T]$ if, almost surely
\[
\int_0^t\frac{c_s}{V_{s-}^{1,\pi,0}}\,ds<x \ \ \mbox{ and } \ \ \pi_{\tau_n}f_{\eps_n}(Y_{\eps_n,n})>-1, \ \forall \tau_n\le t.
\]
The class $\calA(x)$ of admissible pairs for initial wealth $x>0$ is defined as the set of portfolio-consumption pairs $(\pi,c)$  for which equation (\ref{eqV}) possesses an unique strong solution $V^{x,\pi,c}=\{V_t^{x,\pi,c}\}_{t\in [0,T]}$ such that $V_t^{x,\pi,c}\geq 0,$ a.s. for all $t\in[0,T].$

We now define the utility maximization problem for optimal choice of portfolio and consumption processes.
Let $U_1 : [0,T] \times [0,\infty) \rightarrow [-\infty,\infty)$ and $U_2 : [0,\infty) \rightarrow [-\infty,\infty)$ denote consumption and
investment utility functions respectively, satisfying the following conditions
%
\begin{enumerate}
\item[(i)] $U_1(t,x) > -\infty$
and $U_2(x) > -\infty$
for all $t\in [0,T]$
and $x \in (0,\infty)$
%
%

\item[(ii)] for each $t \in [0,T]$ the mappings
$U_1(t,\cdot): (0,\infty) \rightarrow \R$
and $U_2(\cdot) : (0,\infty) \rightarrow \R$
are strictly increasing, strictly concave, of class $C^{1}$ on $(0,\infty),$
such that
\[
\lim_{x \downarrow 0, \; x > 0} \frac{\partial U_1}{\partial x}(t,x) = +\infty,\;\;\;
\lim_{x \rightarrow \infty}  \frac{\partial U_1}{\partial x}(t,x) = 0,\;\;\;
\lim_{x \downarrow 0, \; x > 0} U_2'(x) = +\infty,\;\;\;
\lim_{x \rightarrow \infty}  U_2'(x) = 0.
\]
\item[(iii)] $U_1$ and
$\frac{\partial U_1}{\partial x}$
are continuous on $[0,T]\times (0,\infty)$.
\end{enumerate}%

%
Given the initial state of the Markov chain $\eps(0)=i\in\calI,$ let $\tilde{\calA}_i(x)$ denote the class of admissible portfolio-consumption strategies $(\pi,c)\in\calA(x)$ such that
\[
\Exp_i\left[\int_0^T U_1(t,c_t)^{-}\,dt+U_2(V_T^{x,\pi,c})^{-}\right]>-\infty,
\]
where $\alpha^{-}:=\min\set{0,\alpha}$ is the negative part of $\alpha\in\R$ and $\Exp_i[\cdot]:=\Exp[\cdot|\eps(0)=i].$ We define the utility functional
\[
J_i(x;\pi,c):=\Exp_i\left[\int_0^T U_1(t,c_t)\,dt+U_2(V_T^{x,\pi,c})\right], \ \ (\pi,c)\in\tilde{\calA}_i(x),
\]
and consider the following utility maximization problem from terminal wealth and consumption
\begin{equation}\label{utilitymax}
\vartheta_i(x):=\sup_{(\pi,c)\in\tilde{\calA}_i(x)}J_i(x;\pi,c), \ \ x>0, \ \ i\in\calI.
\end{equation}
An admissible portfolio-consumption pair $(\hat{\pi},\hat{c})\in\tilde{\calA}_i(x)$ is said to be optimal for the initial state $\eps(0)=i$ and initial wealth $x>0$ if $\vartheta_i(x)=J_i(x;\hat{\pi},\hat{c}).$

\section{Martingale approach and main result}
Recall that the compensator $\rho(dy,dt)$ of the marked point process $\gamma(dy,dt)$ is the unique (possibly, up to a null set) predictable random measure such that, for every $\Fil-$predictable map $\phi(t,y)$ the two following conditions hold
\begin{itemize}
  \item[i.] The process
  \[
  \int_0^t\int_E\phi(s,y)\,\rho(dy,ds), \ \ t\geq 0,
  \]
  is $\Fil-$predictable.

  \item[ii.] If the process
  \[
  \int_0^t\int_E\abs{\phi(s,y)}\,\rho(dy,ds)<+\infty, \ \ \forall t\geq 0.
\]
is increasing and locally integrable, then
\[
M_t(\phi):=\int_0^t\int_E \phi(s,y)\,\gamma(dy,ds)-\int_0^t\int_E\phi(s,y)\,\rho(dy,ds), \ \ t\geq 0,
\]
is $\Fil$-local martingale (see e.g. Jeanblanc et al \cite[Definition 8.8.2.1]{jeanblanc}).
\end{itemize}
The following is the main assumption for the rest of this section
\begin{Assumption}\label{A1}
There exists $(\lambda_i)_{i\in\calI}$ with $\lambda_i>0$ for all $i\in\calI$ such that the compensator $\rho(dy,dt)$ of $\gamma(dy,dt)$ satisfies
  \begin{equation}\label{condrho}
  \rho(dy,dt)=F_{\eps(t-)}(dy)\lambda_{\eps(t-)}\,dt, \ \mbox{ a.s.}
  \end{equation}
\end{Assumption}

\begin{remark}
Condition (\ref{condrho}) is similar to the main assumption in the recent paper by Elliott and Siu \cite{elliottsiu}. In the next section, using properties of jump-telegraph processes, we will prove that Assumption \ref{A1} actually holds for our market model in the case of a two-state Markov chain.
\end{remark}

\begin{remark}
Under Assumption \ref{A1}, the counting process $N_t(A):=\gamma(A\times (0,t])$ is an inhomogeneous Poisson process with stochastic (Markov-modulated) intensity $F_{\eps(t-)}(A)\lambda_{\eps(t-)},$ see e.g. Jeanblanc et al \cite[Section 8.4.2]{jeanblanc} or the proof of Corollary \ref{lemrhodis} below.
\end{remark}
Let $\widetilde{\gamma}(dy,dt):=\gamma(dy,dt)- F_{\eps(t-)}(dy)\lambda_{\eps(t-)}\,dt$ denote the compensated martingale measure associated with the counting measure $\gamma(dy,dt).$ We define equivalent martingale probability measures via the Radon-Nikodym densities
\[
\left.\frac{d\Qbf^\varphi}{d\Prob}\right|_{\calF_t}:=Z_t^\varphi, \ \ t\in[0,T],
\]
where $Z^\varphi=\{Z_t^\varphi\}_{t\in[0,T]}$ is the solution of the linear SDE
\begin{equation}\label{eq:dZt}
dZ_t=Z_{t-}\int_{E}(\varphi_{\eps(t-)}(y)-1)\,\widetilde{\gamma}(dy,dt), \ \ Z_0=1,
\end{equation}
and, for each $i\in\calI,$ $\varphi_i:\Omega\times E\to (0,\infty)$ is a nonnegative-valued $\Fil$-predictable map satisfying $\int_E\varphi_{i}(y)\,F_{i}(dy)<+\infty$ a.s. In what follows, we denote $\varphi=(\varphi_i)_{i\in\calI}.$

If the process $Z^\varphi$  satisfies $\Exp[Z_T^\varphi]=1,$ then $Z^\varphi$ is a $\Fil$-martingale under $\Prob,$ and $\Qbf^\varphi$ defines a probability measure on $(\Omega,\calF_T),$ see e.g. Theorem T10 in Br\'{e}maud \cite[Chapter VIII]{bremaud}. Moreover, the compensator measure $\rho^\varphi(dy,dt)$ of $\gamma(dy,dt)$ under $\Qbf^\varphi$ satisfies

\[
\rho^\varphi(dy,dt)=F_{\eps(t-)}^\varphi(dy)\lambda_{\eps(t-)}^\varphi\,dt
\]
where, for each $i\in\calI,$
\begin{align*}
F_{i}^\varphi(dy)&:=\frac{\varphi_{i}(y)}{h_{i}^\varphi}F_{i}(dy),\\
\lambda_{i}^{\varphi}&:=\lambda_{i}h_{i}^\varphi,
\end{align*}
and
\[
h_{i}^\varphi:=\int_{E}\varphi_{i}(y)\,F_{i}(dy).
\]
\begin{remark}\label{martcondZ}
If
\begin{equation}\label{alpha}
\int_E\varphi_i(y)^p\,F_i(dy)<+\infty, \ \ \mbox{a.s.}, \ \forall i\in\mathcal{I}
\end{equation}
for some $p>1,$ then $\Exp[Z_T^\varphi]=1$ and $\Qbf^\varphi$ defines a probability measure on $(\Omega,\calF_T),$ see e.g. Theorem T11 of Br\'{e}maud \cite[Chapter VIII]{bremaud}. For $m=2,$ the same holds if condition (\ref{alpha}) is satisfied with $p=1,$ see Remark \ref{remZ:m2} below.
\end{remark}
Let $\Theta$ denote the set of $m$-tuples $\varphi=(\varphi_i)_{i\in\calI}$ of non-negative valued $\Fil$-predictable maps for which $Z^\varphi$ is a $\Fil$-martingale and the following condition holds
  \begin{equation}\label{martcond}
  \mu_i-r_i+\lambda_i\int_{E} f_i(y)\varphi_i(y)\,F_i(dy)=0, \ \ \mbox{a.s. for all }  i\in\calI.
  \end{equation}
\begin{remark}
Under condition (\ref{martcond}), the discounted asset price $\{B_t^{-1}S_t\}_{t\in[0,T]}$ is a $\Qbf^\varphi$-local martingale. Indeed, let $\widetilde{\gamma}^{\,\varphi}(dy,dt):=\gamma(dy,dt)-F_{\eps(t-)}^\varphi(dy)\lambda_{\eps(t-)}^\varphi\,dt$ denote the compensated martingale measure associated under the equivalent probability measure $\Qbf^\varphi.$ Then, we have
\begin{align*}
  d&(B_t^{-1}S_t)=B_t^{-1}S_{t-}\Bigl\{\left(\mu_{\eps(t)}-r_{\eps(t)}\right)\,dt+\int_{E} f_{\eps(t-)}(y)\,\gamma(dy,dt)\Bigr\}\\
  &=B_t^{-1}S_{t-}\left\{\Bigl[\mu_{\eps(t-)}-r_{\eps(t-)}+\lambda_{\eps(t-)}^\varphi\int_{E} f_{\eps(t-)}(y)\,F_{\eps(t-)}^\varphi(dy)\Bigr]\,dt+\int_{E} f_{\eps(t-)}(y)\,\widetilde{\gamma}^{\,\varphi}(dy,dt)\right\}\\
  &=B_t^{-1}S_{t-}\int_{E} f_{\eps(t-)}(y)\,\widetilde{\gamma}^{\,\varphi}(dy,dt)
\end{align*}
and the claim follows. Moreover, since $f_i(y)\neq 0$ for $i\in\calI$ the market model is arbitrage-free (see e.g. Remark 1 of Ratanov and Melnikov \cite{ratmel} for the case of two-regimes and deterministic jumps). Then, by the first fundamental theorem of asset pricing, the set $\Theta$ is non-empty.
\end{remark}

For each $\varphi\in\Theta$ we define the \emph{state price density} process $H^\varphi=\{H_t^{\varphi}\}_{t\in[0,T]}$ by
\[
H_t^{\varphi}:=B_{t}^{-1}Z_t^\varphi, \ \ t\in [0,T].
\]
The solution of equation (\ref{eq:dZt}) is given by
\begin{align*}
Z_t^\varphi&=\exp\left(\int_0^t\int_{E}\bigl(1-\varphi_{\eps(s)}(y)\bigr)\lambda_{\eps(s)}F_{\eps(s)}(dy)ds+\int_0^t\int_{E}\ln \varphi_{\eps(s-)}(y)\gamma(dy,ds)\right)\\
&=\exp\left(\int_0^t\bigl(1-h_{\eps(s)}^{\varphi}\bigr)\lambda_{\eps(s)}\,ds+\int_0^t\int_{E}\ln \varphi_{\eps(s-)}(y)\gamma(dy,ds)\right).
\end{align*}
Then, the process $H^{\varphi}$ satisfies
\begin{equation*}
H_t^{\varphi}
=\exp\left(\int_0^t\bigl[\bigl(1-h_{\eps(s)}^{\varphi}\bigr)\lambda_{\eps(s)}-r_{\eps(s)}\bigr]ds+\int_0^t\int_{E}\ln\bigl(\varphi_{\eps(s-)}(y)\bigr)\gamma(dy,ds)\right), \ \ t\in [0,T].
\end{equation*}
The following is a well-known result for the state price density process $H^\varphi$ usually referred to as \emph{budget constraint.} We include the proof for the sake of completeness.

\begin{proposition}
For all $\varphi\in\Theta$ and $(\pi,c)\in\A(x),$ we have
\begin{equation}\label{budget}
\Exp_i\left[H_T^\varphi V_T^{x,\pi,c}+\int_0^TH_s^\varphi c_s\,ds\right]\le x, \ \ \forall i\in\calI.
\end{equation}
\end{proposition}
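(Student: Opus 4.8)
The plan is to show that the process $\{H^\varphi_t V^{x,\pi,c}_t + \int_0^t H^\varphi_s c_s\, ds\}_{t\in[0,T]}$ is a non-negative $\Prob$-supermartingale (in fact that its ``drift'' is non-positive), so that taking conditional expectation given $\eps(0)=i$ and evaluating at $t=T$ yields the bound $x$ from the value at $t=0$. First I would compute the stochastic differential of the product $H^\varphi_t V^{x,\pi,c}_t$ using the integration-by-parts (product) formula for semimartingales with jumps, i.e. $d(H^\varphi_t V_t) = H^\varphi_{t-}\, dV_t + V_{t-}\, dH^\varphi_t + d[H^\varphi,V]_t$, where $V = V^{x,\pi,c}$ solves the wealth equation (\ref{eqV}) and $H^\varphi = B^{-1}Z^\varphi$ with $Z^\varphi$ solving (\ref{eq:dZt}). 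Since $B^{-1}$ is of finite variation and continuous, $dH^\varphi_t = B_t^{-1}\, dZ^\varphi_t - r_{\eps(t)} H^\varphi_t\, dt$, and the only jumps come from the random measure $\gamma(dy,dt)$.

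Next I would collect the terms. The ``$dt$''-terms should be: from $dV_t$ the part $(V_{t-}r_{\eps(t)} - c_t)\,dt + \pi_t V_{t-}(\mu_{\eps(t)}-r_{\eps(t)})\,dt$ multiplied by $H^\varphi_{t-}$; from $dH^\varphi_t$ the compensator correction $-\int_E (\varphi_{\eps(t-)}(y)-1)\lambda_{\eps(t-)} F_{\eps(t-)}(dy)\, dt$ times $V_{t-}$ times $B_t^{-1}Z^\varphi_{t-}$, plus the $-r_{\eps(t)} H^\varphi_t V_{t-}\, dt$ term; and from the quadratic covariation $d[H^\varphi,V]_t$ — which here is a pure-jump bracket — the term $H^\varphi_{t-}V_{t-}\int_E (\varphi_{\eps(t-)}(y)-1)\pi_t f_{\eps(t-)}(y)\lambda_{\eps(t-)}F_{\eps(t-)}(dy)\,dt$ after compensation. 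The key algebraic step is that the sum of all finite-variation ``$dt$'' contributions other than $-H^\varphi_s c_s\, ds$ collapses to $H^\varphi_{t-}V_{t-}\pi_t\big[\mu_{\eps(t-)} - r_{\eps(t-)} + \lambda_{\eps(t-)}\int_E f_{\eps(t-)}(y)\varphi_{\eps(t-)}(y)F_{\eps(t-)}(dy)\big]\,dt$, which vanishes identically by the martingale condition (\ref{martcond}) defining $\Theta$. What remains is $d(H^\varphi_t V_t) = -H^\varphi_t c_t\, dt + (\text{local martingale terms driven by }\widetilde{\gamma})$, hence
\[
H^\varphi_t V^{x,\pi,c}_t + \int_0^t H^\varphi_s c_s\, ds = x + (\text{local martingale}).
\]

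Finally, since $(\pi,c)\in\A(x)$ forces $V^{x,\pi,c}_t\ge 0$ a.s., and $H^\varphi_t>0$ and $c_t\ge 0$, the left-hand side is non-negative; a non-negative local martingale is a supermartingale, so its $\Exp_i$-expectation at time $T$ is at most its value $x$ at time $0$, giving (\ref{budget}). I expect the main obstacle to be the bookkeeping in the jump integration-by-parts: correctly identifying the covariation term $\Delta H^\varphi_s \Delta V_s$ summed over jumps and its compensator, and making sure the compensator of $\gamma$ (Assumption \ref{A1}) is substituted consistently so that the three groups of $dt$-terms really do combine into the bracket in (\ref{martcond}). A minor technical point is justifying the localization (stopping times reducing the local martingale), which is standard and can be dispatched by Fatou's lemma on the non-negative quantity when passing to the limit.
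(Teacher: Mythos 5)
Your proposal is correct and follows essentially the same route as the paper: integration by parts for the jump product $H^\varphi V$, cancellation of the drift via the martingale condition (\ref{martcond}), and the observation that the resulting local martingale is bounded below (equivalently, that $x$ plus the local martingale is non-negative), hence a supermartingale. The only cosmetic difference is that the paper keeps the jump-covariation term under $\gamma(dy,dt)$ and compensates at the end, while you compensate term by term; the algebra is identical.
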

\begin{proof}
Using the product rule for jump processes     and (\ref{martcond}), we have
\begin{align*}
  d&(H^\varphi_t V^{x,\pi,c}_t)+H^\varphi_tc_t\,dt\\
  &=H^\varphi_{t-}dV^{x,\pi,c}_t+V^{x,\pi,c}_{t-}dH^\varphi_{t}+H^\varphi_{t-}V^{x,\pi,c}_{t-}\pi_t\int_Ef_{\eps(t-)}(y)(\varphi_{\eps(t-)}(y)-1)\,\gamma(dy,dt)+H^\varphi_tc_t\,dt\\
  &=H^\varphi_{t-}V^{x,\pi,c}_{t-}\left\{r_{\eps(t)}\,dt+\pi_t\Bigl[(\mu_{\eps(t)}-r_{\eps(t)})\,dt+\int_E f_{\eps(t-)}(y)\,\gamma(dy,dt)\Bigr]\right\}-H^\varphi_tc_t\,dt\\
  &\phantom{=}+H^\varphi_{t-}V^{x,\pi,c}_{t-}\left\{-r_{\eps(t)}\,dt+\int_E(\varphi_{\eps(t-)}(y)-1)\,\tilde{\gamma}(dy,dt)\right\}\\
  &\phantom{=}+H^\varphi_{t-}V^{x,\pi,c}_{t-}\pi_t\int_Ef_{\eps(t-)}(y)(\varphi_{\eps(t-)}(y)-1)\,\gamma(dy,dt)+H^\varphi_tc_t\,dt\\
  &=H^\varphi_{t-}V^{x,\pi,c}_{t-}\int_E(\pi_tf_{\eps(t-)}(y)\varphi_{\eps(t-)}(y)+\varphi_{\eps(t-)}(y)-1)\,\tilde{\gamma}(dy,dt)
\end{align*}
Integrating, we get
\begin{equation*}
  H_tV_t+\int_0^t H_s^\varphi c_s\,ds
  = x+ \int_0^t\int_E(\pi_sf_{\eps(s-)}(y)\varphi_{\eps(s-)}(y)+\varphi_{\eps(s-)}(y)-1)\,\tilde{\gamma}(dy,ds)
\end{equation*}
almost surely, for all $t\in [0,T].$ The stochastic integral in the right hand side is a $\Fil$-local martingale which is bounded below, hence a $\Fil-$super martingale, and (\ref{budget}) follows.
\end{proof}
We now introduce an auxiliary functional related to the convex dual of the utility functions. Let $U$ denote either $U_2(\cdot)$ or $U_1(t,\cdot)$ with $t\in[0,T]$ fixed. Let $I$  denote the inverse of $U',$  so that
\[
I(U'(x))=U'(I(x))=x, \ \ \forall x>0.
\]
Then, $I$ satisfies
\[
I(y)=\arg\max_{x>0}\set{U(x)-yx}, \ \ y>0.
\]
In particular,
\begin{equation}\label{ineqUI}
U(I(y))-yI(y)\geq U(x)-yx, \ \ \forall x,y>0.
\end{equation}
Notice that $U(I(y))-yI(y)=U^*(y),$ where $U^*(y):=\sup_{x>0}\set{U(x)-yx}$ is the Legendre-Fenchel transform of the map $(-\infty,0)\ni x\mapsto -U(-x)\in\R.$ The map $U^*$ is known as the convex dual of the utility function $U.$


For the rest of this section we fix the initial regime $\eps(0)=i\in\calI.$ For $\varphi\in\Theta,$ we define the map
\[
\calX_i^\varphi(y):=\Exp_i\left[\int_0^T H^\varphi_tI_1(t,yH_t^\varphi)\,dt+H^\varphi_TI_2(yH^\varphi_T)\right].
\]
Let $\widetilde{\Theta}=\set{\varphi\in\Theta:\calX_i^\varphi(y)<\infty, \ \forall y>0, \ \forall i\in\calI}.$ For each $\varphi\in\widetilde{\Theta}$ we denote $\Y_i^\varphi:=(\calX_i^\varphi)^{-1}$ and define the process $c^{x,\varphi}=(c_t^{x,\varphi})_{t\in [0,T]}$ and random variable $G^{x,\varphi}$ as follows
\begin{equation}\label{cY}
\begin{split}
  c_t^{x,\varphi} &:=I_1(t,\Y_i^\varphi(x)H_t^\varphi), \ \ t\in [0,T],\\
  G^{x,\varphi}&:=I_2(\Y_i^\varphi(x)H_T^\varphi).
\end{split}
\end{equation}
Finally, we define the auxiliary functional
\[
L_i(x;\varphi):=\Exp_i\left[\int_0^T U_1(t,c_t^{x,\varphi})\,dt+U_2(G^{x,\varphi})\right], \ \ x>0, \ \varphi\in\widetilde{\Theta}.
\]
\begin{lemma}\label{ineqJL}
  For all $x>0$ and $i\in\calI,$ the following holds
     \[
   J_i(x;\pi,c)\leq L_i(x;\varphi), \ \forall (\pi,c)\in\tilde{\calA}_i(x), \ \varphi\in\widetilde{\Theta}.
   \]
\end{lemma}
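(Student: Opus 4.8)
The plan is to combine the budget-constraint inequality \eqref{budget} with the pointwise Fenchel–Young inequality \eqref{ineqUI} applied to both utilities, and then to recognize the leftover term as a multiple of the quantity that $\Y_i^\varphi$ was precisely defined to normalize.

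First I would fix $x>0$, $i\in\calI$, an admissible pair $(\pi,c)\in\tilde{\calA}_i(x)$, and $\varphi\in\widetilde{\Theta}$, and abbreviate $y:=\Y_i^\varphi(x)$, so that by definition of $\Y_i^\varphi$ as the inverse of $\calX_i^\varphi$ we have $\calX_i^\varphi(y)=x$, i.e.
\[
\Exp_i\Bigl[\int_0^T H_t^\varphi I_1(t,yH_t^\varphi)\,dt+H_T^\varphi I_2(yH_T^\varphi)\Bigr]=x.
\]
Next, apply \eqref{ineqUI} with $U=U_1(t,\cdot)$, $x$-slot $=c_t$, $y$-slot $=yH_t^\varphi$, and separately with $U=U_2$, $x$-slot $=V_T^{x,\pi,c}$, $y$-slot $=yH_T^\varphi$. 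This gives, pointwise in $(\omega,t)$,
\[
U_1(t,c_t^{x,\varphi})-yH_t^\varphi c_t^{x,\varphi}\ \ge\ U_1(t,c_t)-yH_t^\varphi c_t,
\qquad
U_2(G^{x,\varphi})-yH_T^\varphi G^{x,\varphi}\ \ge\ U_2(V_T^{x,\pi,c})-yH_T^\varphi V_T^{x,\pi,c},
\]
using the identifications $c_t^{x,\varphi}=I_1(t,yH_t^\varphi)$ and $G^{x,\varphi}=I_2(yH_T^\varphi)$ from \eqref{cY}. Then I would integrate the first inequality in $t$ over $[0,T]$, add the second, and take $\Exp_i[\cdot]$ on both sides. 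The right-hand side becomes $J_i(x;\pi,c)-y\,\Exp_i\bigl[\int_0^T H_t^\varphi c_t\,dt+H_T^\varphi V_T^{x,\pi,c}\bigr]$, which by \eqref{budget} is $\ge J_i(x;\pi,c)-yx$. The left-hand side is exactly $L_i(x;\varphi)-y\,\calX_i^\varphi(y)=L_i(x;\varphi)-yx$. Cancelling $-yx$ from both sides yields $L_i(x;\varphi)\ge J_i(x;\pi,c)$, as desired.

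The main technical obstacle is justifying that all the expectations above are well-defined (not of the indeterminate form $\infty-\infty$) so that the additions and the cancellation of $yx$ are legitimate. On the $J_i$ side this is handled by the defining property of $\tilde{\calA}_i(x)$, which guarantees $\Exp_i[\int_0^T U_1(t,c_t)^-\,dt+U_2(V_T^{x,\pi,c})^-]>-\infty$; on the dual side, finiteness of $\calX_i^\varphi(y)$ for $\varphi\in\widetilde{\Theta}$ controls the terms $yH_t^\varphi c_t^{x,\varphi}=yH_t^\varphi I_1(t,yH_t^\varphi)$ and $yH_T^\varphi I_2(yH_T^\varphi)$, and hence, via \eqref{ineqUI} rearranged as $U(I(y))\le U^*(y)+yI(y)$ together with $U^*(y)=U(I(y))-yI(y)$, one bounds $L_i(x;\varphi)$ from above and its negative part from below. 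One should also note that \eqref{budget} only requires $(\pi,c)\in\A(x)$, which holds since $\tilde{\calA}_i(x)\subset\A(x)$. Finally, I would remark that the $x$ in \eqref{budget} may be any initial wealth and the inequality is applied with the same $x$ and the state price density $H^\varphi$ for the chosen $\varphi$, so no extra uniformity over $\varphi$ or $i$ is needed beyond what the definition of $\widetilde{\Theta}$ already provides.
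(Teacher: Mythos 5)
Your proposal is correct and follows essentially the same route as the paper's proof: the pointwise inequality (\ref{ineqUI}) applied to $U_1(t,\cdot)$ and $U_2$, combined with the budget constraint (\ref{budget}) and the normalization $\calX_i^\varphi(\Y_i^\varphi(x))=x$, with the term $\Y_i^\varphi(x)\bigl[x-\calX_i^\varphi(\Y_i^\varphi(x))\bigr]$ vanishing. Your additional remarks on the well-definedness of the expectations (avoiding $\infty-\infty$) are a welcome refinement that the paper leaves implicit.
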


\begin{proof}
From (\ref{ineqUI}) and (\ref{cY}), we have
\[
U_1(t,c_t)\le U_1(t,c_t^{x,\varphi})+\Y^{\varphi}(x)H_t^{\varphi}(c_t-c_t^{x,\varphi})
\]
and
\[
U_2(V_T^{x,\pi,c})\le U_2(G^{x,\varphi})+\Y^{\varphi}(x)H_T^{\varphi}(V_T^{x,\pi,c}-G^{x,\varphi}).
\]
Then, by (\ref{budget}) and the definition of $\Y_i^{\varphi},$ we have
\begin{align*}
 J_i(x;\pi,c) &\le L_i(x;\varphi)
 +\Y_i^{\varphi}(x)\cdot\Exp_i\left[\int_0^TH_t^{\varphi}(c_t-c_t^{x,\varphi})\,dt+H_T^\varphi(V_T^{x,\pi,c}-G^{x,\varphi})\right]\\
 &\le L_i(x;\varphi)+\Y_i^{\varphi}(x)[x-\calX_i^{\varphi}(\Y_i^{\varphi}(x))]\\
 &= L_i(x;\varphi)
\end{align*}
and the desired result follows.
\end{proof}



By the previous Lemma, we have $\vartheta_i(x)\le \tilde{\vartheta}_i(x),$ where $\tilde{\vartheta}_i(\cdot)$ is the optimal value function of the minimization problem
\begin{equation}\label{dual}
\tilde{\vartheta}_i(x):=\inf_{\varphi\in\widetilde{\Theta}}L_i(x;\varphi).
\end{equation}
In Theorem \ref{main} below, we find sufficient conditions to ensure $\vartheta_i(x)=\tilde{\vartheta}_i(x)$ as well as the existence of an optimal portfolio-consumption process $(\hat{\pi},\hat{c}).$

For each $x>0$ and $\varphi\in \widetilde{\Theta},$ consider the processes defined as
\[
Y_t^{x,\varphi}:=\Exp\left[\Bigl.H_T^\varphi G^{x,\varphi}+\int_t^TH_s^\varphi c_s^{x,\varphi}\,ds\,\Bigr|\calF_t\right], \ \ t\in [0,T],
\]
and
\[
M_t^{x,\varphi}  := Y_t^{x,\varphi}+\int_0^tH_s^{\varphi}c_s^{x,\varphi}\,ds, \ \ t\in [0,T].
\]
Observe that  $Y_0^{x,\varphi} = \calX_i^\varphi(\Y_i^\varphi(x))= x$ and $Y_t^{x,\varphi} \geq 0 $ for all  $t \in \left[0,T\right].$ Moreover, $M_t^{x,\varphi}$ satisfies
\begin{equation}
M_t^{x,\varphi}=\Exp\left[ H_T^{\varphi} G^{x,\varphi}+\Bigl.\int_0^T H_s^{\varphi} c_s^{x,\varphi}\, ds\, \Bigr|  \calF_t \right], \ t \in [0,T],
\end{equation}
that is, the process $M^{x,\varphi}=\set{M_t^{x,\varphi}}_{t\in[0,T]}$ is an $\Fil$-martingale. Let $\beta^{x,\varphi}(t,y)$ denote the essentially unique martingale representation coefficient of $ M_t^{x,\varphi}$ with respect to the compensated measure $\tilde{\gamma}(dy,dt),$   
\begin{equation}
M_t^{x,\varphi} = x+\int_0^t{\int_{E}} {\beta^{x,\varphi}(s,y)}\,{\tilde{\gamma}}(dy,ds), \ \ t\in [0,T],
\end{equation}
see e.g. Theorem T8 in Section VIII of Br\'{e}maud \cite{bremaud}. Then, the pair $(Y^{x,\varphi},\beta^{x,\varphi})$ satisfies the linear backward SDE
\begin{equation}\label{bsdeY}
Y_t^{x,\varphi}=H_T^\varphi G^{x,\varphi}+\int_t^T H_s^\varphi c_s^{x,\varphi}\,ds-\int_t^T\!\int_E\beta^{x,\varphi}(s,y)\,{\tilde{\gamma}}(dy,dt), \ \ t\in [0,T],
\end{equation}
with final condition $Y_T^{x,\varphi}=H_T^\varphi G^{x,\varphi}.$ The following is the main result of this paper

\begin{theorem}\label{main}
For $x>0$ and $i\in\calI$ fixed, suppose there exist $\hat{\varphi} \in \tilde{\Theta}$ and a $\Fil$-predictable portfolio process $\hat{\pi}$ satisfying
\begin{equation}\label{pif}
\hat{\pi}_tf_{\eps(t-)}(y)+1=\frac{1}{\hat{\varphi}_{\eps(t-)}(y)}\left[\frac{\beta^{x,\hat{\varphi}}(t,y)}{Y_{t-}^{x,\hat{\varphi}}}+1\right], \ \ \rho\mbox{-a.e. }(t,y)\in [0,T]\times E.
\end{equation}
Assume also that the wealth equation (\ref{eqV}) has a solution for $ (\hat{\pi} , \hat{c}),$ where $\hat{c}=c^{x,\hat{\varphi}}$. Then the following assertions hold

\begin{itemize}
   \item[(a)] The pair $(\hat{\pi} , \hat{c})$ belongs to $\tilde{\calA}_i(x)$ and solves the optimal portfolio-consumption problem (\ref{utilitymax}),
   \item[(b)] the wealth process $V^{x,\hat{\pi},\hat{c}}$ is a modification of the process $X_t^{x,\hat{\varphi}}:=Y_t^{x,\hat{\varphi}}/H_t^{\hat{\varphi}}$,
   \item[(c)] the optimal value function for the utility maximization (\ref{utilitymax}) satisfies $\vartheta_i=\mathcal{K}_i^{\hat{\varphi}}\circ\mathcal{Y}_i^{\hat{\varphi}}$ where
\[
\mathcal{K}_i^{\varphi}(y):=\Exp_i\left[\int_0^TU_1(t,I_1(t,yH_t^{\varphi}))\,dt+U_2(I_2(yH_T^{\varphi}))\right], \ \ y>0.
\]
 \end{itemize}
\end{theorem}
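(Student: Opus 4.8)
The plan is to reduce everything to assertion (b): once we know that the wealth process $V^{x,\hat\pi,\hat c}$ coincides with $X^{x,\hat\varphi}=Y^{x,\hat\varphi}/H^{\hat\varphi}$, the terminal condition of the backward SDE (\ref{bsdeY}) yields $V_T^{x,\hat\pi,\hat c}=Y_T^{x,\hat\varphi}/H_T^{\hat\varphi}=G^{x,\hat\varphi}$, while $\hat c=c^{x,\hat\varphi}$ by hypothesis, and then (a) and (c) follow from Lemma~\ref{ineqJL} by a sandwich argument. So the heart of the matter is to verify that $X^{x,\hat\varphi}$ solves the wealth equation (\ref{eqV}) for the pair $(\hat\pi,\hat c)$, which I would do by mimicking the computation in the proof of the budget-constraint inequality (\ref{budget}).

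Concretely, write $\hat V:=V^{x,\hat\pi,\hat c}$ for the solution assumed to exist and apply the product rule to $W:=H^{\hat\varphi}\hat V$. Exactly as in the budget-constraint proof --- using the dynamics of $H^{\hat\varphi}$, the wealth equation (\ref{eqV}), keeping the jump-covariation term, and simplifying the $dt$-drift with the martingale condition (\ref{martcond}) --- one obtains
\[
dW_t+H_t^{\hat\varphi}\hat c_t\,dt= W_{t-}\int_E\bigl(\hat\pi_t f_{\eps(t-)}(y)\hat\varphi_{\eps(t-)}(y)+\hat\varphi_{\eps(t-)}(y)-1\bigr)\,\widetilde\gamma(dy,dt).
\]
The defining relation (\ref{pif}) of $\hat\pi$, multiplied through by $\hat\varphi_{\eps(t-)}(y)$, says precisely that this integrand equals $\beta^{x,\hat\varphi}(t,y)/Y_{t-}^{x,\hat\varphi}$ for $\rho$-a.e.\ $(t,y)$; here one needs $Y_{t-}^{x,\hat\varphi}>0$, which holds a.s.\ since $Y_t^{x,\hat\varphi}=\Exp[\,H_T^{\hat\varphi}G^{x,\hat\varphi}+\int_t^T H_s^{\hat\varphi}c_s^{x,\hat\varphi}\,ds\mid\calF_t\,]\ge\Exp[\,H_T^{\hat\varphi}G^{x,\hat\varphi}\mid\calF_t\,]$ with the right-hand side a strictly positive martingale ($I_2$ is $(0,\infty)$-valued and $H^{\hat\varphi}>0$), so that $Y^{x,\hat\varphi}$ and its left limits are strictly positive on $[0,T]$. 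Hence $W$ and $Y^{x,\hat\varphi}$ both solve the linear SDE
\[
dU_t=-H_t^{\hat\varphi}\hat c_t\,dt+U_{t-}\int_E\frac{\beta^{x,\hat\varphi}(t,y)}{Y_{t-}^{x,\hat\varphi}}\,\widetilde\gamma(dy,dt),\qquad U_0=x,
\]
the statement for $Y^{x,\hat\varphi}$ being (\ref{bsdeY}) in forward form. Pathwise uniqueness for this equation --- its Dol\'eans-Dade exponential is well defined and positive because $1+\beta^{x,\hat\varphi}(\tau_n,y)/Y_{\tau_n-}^{x,\hat\varphi}=Y_{\tau_n}^{x,\hat\varphi}/Y_{\tau_n-}^{x,\hat\varphi}>0$ at jump times --- forces $W=Y^{x,\hat\varphi}$ up to a modification, i.e.\ $\hat V=X^{x,\hat\varphi}$; this is (b).

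From here, $\hat V=X^{x,\hat\varphi}=Y^{x,\hat\varphi}/H^{\hat\varphi}\ge0$, so with $\hat c=c^{x,\hat\varphi}\ge0$ and the predictability/integrability requirements fulfilled we get $(\hat\pi,\hat c)\in\calA(x)$; to upgrade this to $\tilde{\calA}_i(x)$ I would control the negative parts via the conjugate inequality (\ref{ineqUI}) at a fixed argument --- e.g.\ $U_1(t,c_t^{x,\hat\varphi})\ge U_1(t,I_1(t,1))-\Y_i^{\hat\varphi}(x)\,H_t^{\hat\varphi}\,I_1(t,1)$ and likewise for $U_2$ --- then invoke continuity of $t\mapsto I_1(t,1)$ and $t\mapsto U_1(t,I_1(t,1))$ on $[0,T]$ together with $\Exp_i[\int_0^T H_t^{\hat\varphi}\,dt+H_T^{\hat\varphi}]<\infty$ (which holds since $H^{\hat\varphi}\le Z^{\hat\varphi}$ and $Z^{\hat\varphi}$ is a $\Prob$-martingale). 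Since $V_T^{x,\hat\pi,\hat c}=G^{x,\hat\varphi}$ and $\hat c_t=c_t^{x,\hat\varphi}=I_1(t,\Y_i^{\hat\varphi}(x)H_t^{\hat\varphi})$, the definitions give $J_i(x;\hat\pi,\hat c)=L_i(x;\hat\varphi)=\mathcal{K}_i^{\hat\varphi}(\Y_i^{\hat\varphi}(x))$. On the other hand Lemma~\ref{ineqJL} gives $J_i(x;\pi,c)\le L_i(x;\hat\varphi)$ for every $(\pi,c)\in\tilde{\calA}_i(x)$, whence $\vartheta_i(x)\le L_i(x;\hat\varphi)=J_i(x;\hat\pi,\hat c)$, while the reverse inequality is immediate from $(\hat\pi,\hat c)\in\tilde{\calA}_i(x)$. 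Thus $(\hat\pi,\hat c)$ is optimal and $\vartheta_i=\mathcal{K}_i^{\hat\varphi}\circ\Y_i^{\hat\varphi}$ (and, incidentally, $\tilde\vartheta_i=\vartheta_i$ with $\hat\varphi$ attaining the infimum in (\ref{dual})), giving (a) and (c).

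The step I expect to be the main obstacle is (b): carefully carrying out the product-rule computation for $H^{\hat\varphi}\hat V$ (in particular the jump-covariation term), establishing strict positivity of $Y^{x,\hat\varphi}$ and of its left limits so that the rearrangement of (\ref{pif}) is legitimate, and justifying pathwise uniqueness for the resulting linear SDE driven by the compensated random measure $\widetilde\gamma$. Everything downstream is bookkeeping with the conjugate inequality (\ref{ineqUI}) and Lemma~\ref{ineqJL}.
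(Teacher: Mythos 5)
Your argument is correct, and its overall architecture --- reduce assertions (a) and (c) to (b), then conclude by the sandwich $J_i(x;\pi,c)\le L_i(x;\hat\varphi)=J_i(x;\hat\pi,\hat c)$ from Lemma \ref{ineqJL} --- is exactly the paper's. Where you genuinely diverge is in the proof of (b). The paper works ``downward'': it computes $d(1/H^{\hat\varphi}_t)$, applies the product rule to $Y^{x,\hat\varphi}_t\cdot(1/H^{\hat\varphi}_t)$, and uses (\ref{pif}) together with the martingale condition (\ref{martcond}) to recognize the resulting dynamics as the wealth equation (\ref{eqV}) for $(\hat\pi,\hat c)$; it then concludes by uniqueness of the solution of (\ref{eqV}). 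You work ``upward'': you differentiate $H^{\hat\varphi}_t V^{x,\hat\pi,\hat c}_t$, reusing verbatim the computation from the budget-constraint proposition, note that (\ref{pif}) (multiplied through by $\hat\varphi_{\eps(t-)}(y)$) turns the integrand into $\beta^{x,\hat\varphi}(t,y)/Y^{x,\hat\varphi}_{t-}$, and then match against the forward form of (\ref{bsdeY}), invoking pathwise uniqueness for the linear SDE driven by $\tilde\gamma$ instead of for (\ref{eqV}). The two computations are essentially transposes of one another and cost the same; your version has the small advantage of recycling work already done. Your write-up also supplies two points the paper asserts or uses tacitly: the strict positivity of $Y^{x,\hat\varphi}$ and of its left limits (needed in either approach, since (\ref{pif}) already divides by $Y^{x,\hat\varphi}_{t-}$), which you correctly derive from $Y^{x,\hat\varphi}_t\ge\Exp[H_T^{\hat\varphi}G^{x,\hat\varphi}\mid\calF_t]$ with the right-hand side a strictly positive nonnegative martingale; and the verification that $(\hat\pi,\hat c)$ actually lies in $\tilde{\calA}_i(x)$, i.e.\ the control of the negative parts of $U_1(t,\hat c_t)$ and $U_2(V_T^{x,\hat\pi,\hat c})$ via the conjugate inequality (\ref{ineqUI}) and $\Exp_i[\int_0^T H_t^{\hat\varphi}\,dt+H_T^{\hat\varphi}]\le T+1$, which the paper omits entirely. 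Both additions are sound and strengthen the exposition rather than change the result.
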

\begin{proof}
We prove first part (b). Since $X_0^{x,\hat{\varphi}}=Y_0^{x,\hat{\varphi}}=x,$ it suffices to show that $X_t^{x,\hat{\varphi}}$ satisfies the wealth equation for the pair $(\hat{\pi},\hat{c}).$ Notice first that, by the definition of $Z_t^{\hat{\varphi}},$ the process $H_t^{\hat{\varphi}}$ satisfies the linear stochastic equation
\begin{align*}
  dH^{\hat{\varphi}}_t&=H^{\hat{\varphi}}_{t-}\Bigl\{-r_{\eps(t)}\,dt+\int_E(\hat{\varphi}_{\eps(t-)}(y)-1)\,\tilde{\gamma}(dy,dt)\Bigr\}\\
          &=H^{\hat{\varphi}}_{t-}\Bigl\{\Bigl[-r_{\eps(t)}-\lambda_{\eps(t-)}\int_E(\hat{\varphi}_{\eps(t-)}(y)-1)\,F_{\eps(t-)}(dy)\Bigr]\,dt
          +\int_E(\hat{\varphi}_{\eps(t-)}(y)-1)\,\gamma(dy,dt)\Bigr\}
\end{align*}
Using integration formula for marked point processes (see e.g. Jeanblanc et al \cite[Section 8.8]{jeanblanc}), the differential of $1/H^{\hat{\varphi}}_t$ is given by
\[
  d\Bigl(\frac{1}{H^{\hat{\varphi}}_t}\Bigr)
  =\frac{1}{H^{\hat{\varphi}}_{t-}}\Bigl\{\Bigl[r_{\eps(t)}+\lambda_{\eps(t-)}\int_E(\hat{\varphi}_{\eps(t-)}(y)-1)\,F_{\eps(t-)}(dy)\Bigr]\,dt
  +\int_E\Bigl(\frac{1}{\hat{\varphi}_{\eps(t-)}(y)}-1\Bigr)\,\gamma(dy,dt)\Bigr\}.
\]
From (\ref{bsdeY}), the differential of $Y^{x,\hat{\varphi}}_t$ is given by
\[
dY^{x,\hat{\varphi}}_t=-H^{\hat{\varphi}}_tc^{x,\hat{\varphi}}_t\,dt+\int_E\beta^{x,\hat{\varphi}}(t,y)\,\tilde{\gamma}(dy,dt).
\]
Using the product rule, we have
\begin{align*}
  d\Bigl(\frac{Y^{x,\hat{\varphi}}_t}{H^{\hat{\varphi}}_t}\Bigr)
  &=Y_{t-}\,d\Bigl(\frac{1}{H^{\hat{\varphi}}_t}\Bigr)+\frac{1}{H^{\hat{\varphi}}_{t-}}\,dY^{x,\hat{\varphi}}_t
  +\frac{1}{H^{\hat{\varphi}}_{t-}}\int_E\beta^{x,\hat{\varphi}}(t,y)\Bigl(\frac{1}{\hat{\varphi}_{\eps(t-)}(y)}-1\Bigr)\,\gamma(dy,dt)\\
  &=\frac{Y_{t-}}{H^{\hat{\varphi}}_{t-}}\Bigl\{\Bigl[r_{\eps(t)}+\lambda_{\eps(t-)}\int_E(\varphi_{\eps(t)}(y)-1)\,F_{\eps(t)}(dy)\Bigr]\,dt
  +\int_E\Bigl(\frac{1}{\hat{\varphi}_{\eps(t-)}(y)}-1\Bigr)\,\gamma(dy,dt)\Bigr\}\\
  &\phantom{==}-c^{x,\hat{\varphi}}_t\,dt+\frac{1}{H^{\hat{\varphi}}_{t-}}\Bigl\{\int_E\beta^{x,\hat{\varphi}}(t,y)\,\tilde{\gamma}(dy,dt)
  +\int_E\beta^{x,\hat{\varphi}}(t,y)\Bigl(\frac{1}{\hat{\varphi}_{\eps(t-)}(y)}-1\Bigr)\,\gamma(dy,dt)\Bigr\}.
\end{align*}
We multiply and divide the last bracket by $Y_{t-}^{x,\hat{\varphi}}$ and use $\tilde{\gamma}(dy,dt)=\gamma(dy,dt)-\lambda_{\eps(t-)}\,F_{\eps(t-)}(dy)\,dt$ to obtain
\begin{align*}
d\Bigl(\frac{Y^{x,\hat{\varphi}}_t}{H^{\hat{\varphi}}_t}\Bigr)
  &=\Bigl\{\frac{Y^{x,\hat{\varphi}}_t}{H^{\hat{\varphi}}_t}r_{\eps(t)}-c^{x,\hat{\varphi}}_t\Bigr\}\,dt
+\frac{Y_{t-}}{H^{\hat{\varphi}}_{t-}}\left\{\lambda_{\eps(t-)}\int_E\Bigl(\varphi_{\eps(t-)}(y)-1-\frac{\beta^{x,\hat{\varphi}}(t,y)}{Y_{t-}}\Bigr)\,F_{\eps(t-)}(dy)\,dt\right.\\
&\phantom{==}\left.+\int_E\Bigl(\frac{1}{\hat{\varphi}_{\eps(t-)}(y)}-1+\frac{\beta^{x,\hat{\varphi}}(t,y)}{\hat{\varphi}_{\eps(t-)}(y)Y_{t-}}\Bigr)\,\gamma(dy,dt)\right\}
\end{align*}
From (\ref{pif}), for the integrand in the stochastic integral, we have
\[
\frac{1}{\hat{\varphi}_{\eps(t-)}(y)}-1+\frac{\beta^{x,\hat{\varphi}}(t,y)}{\hat{\varphi}_{\eps(t-)}(y)Y_{t-}}
  =\pi_tf_{\eps(t-)}(y)
\]
and (\ref{pif}) in conjunction with (\ref{martcond}) yields
\begin{align*}
\lambda_{\eps(t-)}&\int_E\Bigl(\hat{\varphi}_{\eps(t-)}(y)-1-\frac{\beta^{x,\hat{\varphi}}(t,y)}{Y_{t-}}\Bigr)\,F_{\eps(t-)}(dy)\\
&=\lambda_{\eps(t-)}\int_E(\hat{\varphi}_{\eps(t-)}(y)-\hat{\varphi}_{\eps(t-)}(y)(\pi_tf_{\eps(t-)}(y)+1))\,F_{\eps(t-)}(dy)\\
&=\lambda_{\eps(t-)}\int_E-\pi_t\hat{\varphi}_{\eps(t-)}(y)f_{\eps(t-)}(y)\,F_{\eps(t-)}(dy)\\
&=\pi_t\left(\mu_{\eps(t-)}-r_{\eps(t-)}\right)
\end{align*}
and part (b) follows. In particular, we have $V_T^{x,\hat{\pi},\hat{c}}=Y_T^{x,\hat{\varphi}}/H_T^{\hat{\varphi}}=G^{x,\hat{\varphi}},$ a.s. This in turn implies
\begin{equation}
L_i(x;\hat{\varphi})=J_i(x;\hat{\pi},\hat{c})
\end{equation}
and part (a) follows from Lemma \ref{ineqJL}. Part (c) follows easily since $\vartheta_i(x)=\tilde{\vartheta}_i(x)=L_i(x;\hat{\varphi}).$
\end{proof}
\begin{remark}
Although optimality condition (\ref{pif}) looks rather restrictive,  as we will see in the last section, it simplifies significantly in the case of logarithmic utility functions.
\end{remark}
\section{Telegraph processes with Markov-modulated random jumps}

In this section we revisit briefly the telegraph model with Markov-modulated random jumps introduced recently by L\'opez and Ratanov \cite{lopezrat}. We assume that the Markov chain $\eps(\cdot)=\{\eps(t)\}_{t\in[0,T]}$ takes only two values $\set{0,1}$ with intensity matrix $\bigl(\begin{smallmatrix}-\lambda_0 & \lambda_0\\ \lambda_1 &-\lambda_1\end{smallmatrix}\bigr).$ Thus, $X$ is given by the \emph{jump-telegraph process
}
\begin{equation}\label{Tt}
X_t=\int_0^t \mu_{\eps(s)}ds+\sum_{n=1}^{N_t(E)}f_{\eps_n}(Y_{\eps_n,n}), \ \ t\in [0,T].
\end{equation}
We assume that the alternating tendencies $\mu_0$ and $\mu_1$ satisfy $\mu_0\neq \mu_1.$ By fixing the initial state $\eps(0)=i\in\{0,1\}$, we have the following equality in distribution
\begin{equation}\label{eq:edZ}
X_t\overset{d}{=}\mu_i t\mathbf{1}_{\{t<\tau_1\}}+\bigl[\mu_i\tau_1+f_i(Y_{i,1})+\widetilde{X}_{t-\tau_1}\bigr]\mathbf{1}_{\{t>\tau_1\}}, \ \ t\in[0,T],
\end{equation}
where the process $\widetilde{X}=\{\widetilde{X}_t\}_{t\in[0,T]}$ is a jump-telegraph process as in (\ref{Tt}) independent of $X$  starting from the opposite initial state $1-i$.

\begin{figure}[htb]
\centering
\includegraphics{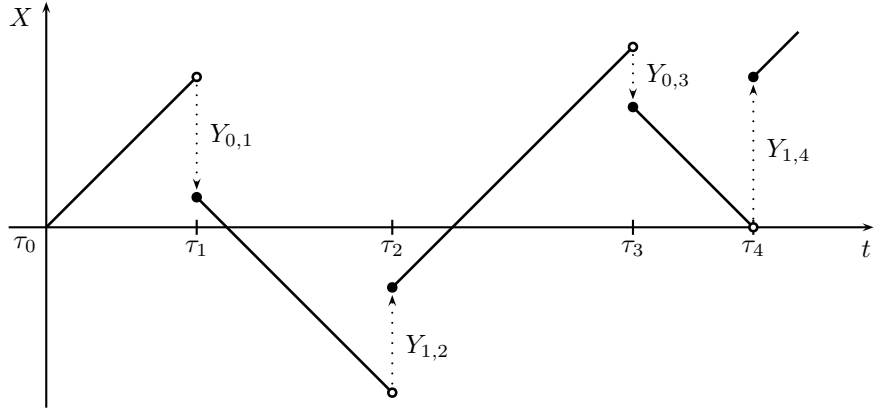}
\caption{\small A sample path of $X$ with $\mu_1<0<\mu_0$, $f_0(y)=f_1(y)=y$ and initial state $\eps(0)=0$.}
\end{figure}

We denote $\Prob_i(\cdot)=\Prob(\cdot|\eps(0)=i),$ and define $p_i(t,x)$ as the density function of the random variable $X_t,$ given the initial $\eps(0)=i\in\set{0,1},$
\begin{equation}
p_i(t,x):=\frac{\Prob_i(X_t\in d x)}{d x}, \ \ t\in[0,T], \ x\in\R.
\end{equation}
That is, for any $\Delta\in\B(\R)$, we have
\[
\int_{\Delta}\! p_{i}(t,x)\,dx=\Prob_i\left(X_t\in \Delta\right).
\]
Recall that the holding or inter-arrival times $\{\tau_{n+1}-\tau_{n}\}_{n\geq0}$ of the Markov chain $\eps(\cdot)$ are exponentially distributed with
\begin{equation}\label{eq:T-n}
\Prob(\tau_{n+1}-\tau_n>t\mid \mathcal{F}_{\tau_n}\bigr)=\exp\bigl(-\lambda_{\eps(\tau_n)}t\bigr).
\end{equation}
Here we have set $\tau_0:=0.$ Using (\ref{eq:T-n}) and (\ref{eq:edZ}) together with the total probability theorem, it follows that the densities functions $p_i(t,x)$ satisfy the following system of integral equations on $[0,T]\times\R,$
\begin{equation*}
\begin{split}
p_0(t,x)&=e^{-\lambda_0t}\delta(x-\mu_0t)+\int_0^t\left(\int_{E} p_{1}(t-s,x-\mu_0s-f_0(y))F_0(d y)\right)\lambda_0e^{-\lambda_0s}ds \\
p_1(t,x)&=e^{-\lambda_1t}\delta(x-\mu_1t)+\int_0^t\left(\int_{E} p_{0}(t-s,x-\mu_1s-f_1(y))F_1(d y)\right)\lambda_1e^{-\lambda_1s}ds
\end{split}
\end{equation*}
where $\delta(\cdot)$ is Dirac's delta function. This system is equivalent to the following system of coupled partial integro-differential equations on $(0,T]\times\R,$
\begin{equation}\label{eq:dX}
\begin{split}
\frac{\partial p_0}{\partial t}(t,x)+\mu_0\frac{\partial p_0}{\partial x}(t,x)&=-\lambda_0p_0(t,x)+\lambda_0\int_{E}p_{1}(t,x-f_0(y))F_0(dy)\\
\frac{\partial p_1}{\partial t}(t,x)+\mu_1\frac{\partial p_1}{\partial x}(t,x)&=-\lambda_1p_1(t,x)+\lambda_1\int_{E}p_{0}(t,x-f_1(y))F_1(dy)
\end{split}
\end{equation}
with initial conditions $p_0(0,x)=p_1(0,x)=\delta(x)$.
\begin{theorem}\label{thm:meanZG}
Assume $\int_{E} f_i(y)F_i(dy)<+\infty$ for $i=0,1.$ Then the conditional expectations $m_i(t):=\Exp_i[X_t]$ of the random variables $X_t$ satisfy
\begin{equation}\label{meanZG}
\begin{split}
m_0(t)&=\frac{1}{2\lambda}\left[(\lambda_1d_0+\lambda_0d_1)t+\lambda_0(d_0-d_1)\left(\frac{1-e^{-2\lambda t}}{2\lambda}\right)\right],\\
m_1(t)&=\frac{1}{2\lambda}\left[(\lambda_1d_0+\lambda_0d_1)t-\lambda_1(d_0-d_1)\left(\frac{1-e^{-2\lambda t}}{2\lambda}\right)\right],
\end{split}
\end{equation}
where
\begin{equation*}
2\lambda:=\lambda_0+\lambda_1,\quad \eta_i:=\int_{E} f_i(y)F_i(dy),\quad d_i:=\mu_i+\lambda_i\eta_i, \ \ i=0,1.
\end{equation*}
\end{theorem}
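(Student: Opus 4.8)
The plan is to turn the system of partial integro-differential equations (\ref{eq:dX}) into a closed linear system of ordinary differential equations for the first moments $m_i(t)=\Exp_i[X_t]$, and then integrate it explicitly. As a preliminary step one checks that $m_i(t)$ is finite for every $t\in[0,T]$: since $X_t=\int_0^t\mu_{\eps(s)}\,ds+\sum_{n=1}^{N_t(E)}f_{\eps_n}(Y_{\eps_n,n})$ is a finite sum of jumps, $N_t(E)$ is dominated by a Poisson variable of parameter $\max(\lambda_0,\lambda_1)\,t$ and hence integrable, and the bound $f_i>-1$ together with $\int_E f_i\,F_i<\infty$ forces $\int_E|f_i|\,F_i<\infty$; a conditioning argument then gives $\Exp_i[|X_t|]<\infty$.

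First I would multiply the first equation in (\ref{eq:dX}) by $x$ and integrate over $\R$. The time-derivative term yields $m_0'(t)$; integration by parts converts the transport term $\mu_0\int_\R x\,\partial_x p_0\,dx$ into $-\mu_0$ (the boundary term vanishing by the decay of $p_0$); the term $-\lambda_0\int_\R x\,p_0\,dx$ equals $-\lambda_0 m_0(t)$; and in the convolution term the change of variables $u=x-f_0(y)$ together with Fubini gives $\lambda_0\int_E\bigl(m_1(t)+f_0(y)\bigr)F_0(dy)=\lambda_0 m_1(t)+\lambda_0\eta_0$. Collecting terms, and doing the same with the second equation, one arrives at
\begin{equation*}
m_0'(t)=d_0+\lambda_0\bigl(m_1(t)-m_0(t)\bigr),\qquad m_1'(t)=d_1+\lambda_1\bigl(m_0(t)-m_1(t)\bigr),
\end{equation*}
with $m_0(0)=m_1(0)=0$, coming from $p_i(0,\cdot)=\delta$.

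This linear system decouples conveniently. Putting $s(t):=m_0(t)-m_1(t)$ gives $s'(t)=(d_0-d_1)-2\lambda\,s(t)$ with $s(0)=0$, hence $s(t)=\frac{d_0-d_1}{2\lambda}\bigl(1-e^{-2\lambda t}\bigr)$. Putting $w(t):=\lambda_1 m_0(t)+\lambda_0 m_1(t)$ makes the coupling terms cancel, so $w'(t)=\lambda_1 d_0+\lambda_0 d_1$ with $w(0)=0$, hence $w(t)=(\lambda_1 d_0+\lambda_0 d_1)\,t$. Solving $m_0-m_1=s$, $\lambda_1 m_0+\lambda_0 m_1=w$ for $m_0,m_1$ yields $m_0=(w+\lambda_0 s)/(2\lambda)$ and $m_1=(w-\lambda_1 s)/(2\lambda)$, which is precisely (\ref{meanZG}).

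The only delicate point is the justification of the formal operations on the (distributional) system (\ref{eq:dX}) --- finiteness of moments, vanishing boundary terms, interchange of integrals. I expect to bypass this by working directly with the equality in distribution (\ref{eq:edZ}): taking $\Exp_i$ there and using that $\tau_1$ is exponential with parameter $\lambda_i$ (see (\ref{eq:T-n})) and that $\widetilde X$ starts from the opposite state $1-i$ and is independent of $(\tau_1,Y_{i,1})$ gives the renewal-type system
\begin{equation*}
m_i(t)=\mu_i t\,e^{-\lambda_i t}+\int_0^t\lambda_i e^{-\lambda_i s}\bigl(\mu_i s+\eta_i+m_{1-i}(t-s)\bigr)\,ds,\qquad i=0,1.
\end{equation*}
Differentiating this pair of Volterra equations (or verifying that (\ref{meanZG}) solves it and appealing to uniqueness) recovers the ODE system above, after which the computation of the previous paragraph concludes the proof.
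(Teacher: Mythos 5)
Your argument is correct and follows essentially the same route as the paper: the paper's proof likewise multiplies the system (\ref{eq:dX}) by $x$, integrates by parts to obtain the ODE system $m_0'=-\lambda_0 m_0+\lambda_0 m_1+\mu_0+\lambda_0\eta_0$, $m_1'=-\lambda_1 m_1+\lambda_1 m_0+\mu_1+\lambda_1\eta_1$ with zero initial data, and then states that (\ref{meanZG}) is its unique solution. Your explicit decoupling via $s=m_0-m_1$ and $w=\lambda_1 m_0+\lambda_0 m_1$, and the renewal-equation justification based on (\ref{eq:edZ}) and (\ref{eq:T-n}), simply fill in details the paper leaves implicit.
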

\begin{proof}
By definition, we have
\begin{equation*}
m_i(t)=\int_{-\infty}^\infty xp_i(t,x)d x,\quad i=0,1.
\end{equation*}
Differentiating the above equation, using the system \eqref{eq:dX} and integrating by parts, we obtain the following system of ODEs
\begin{equation*}
\begin{split}
\frac{d m_0}{d t}(t)&=-\lambda_0m_0(t)+\lambda_0m_1(t)+\mu_0+\lambda_0\eta_0,\\
\frac{d m_{1}}{d t}(t)&=-\lambda_1m_1(t)+\lambda_1m_0(t)+\mu_1+\lambda_1\eta_1,
\end{split}
\end{equation*}
with initial conditions $m_0(0)=m_1(0)=0$. The unique solution of this Cauchy problem is given by \eqref{meanZG}.
\end{proof}
\begin{theorem}\label{thm:expZG}
Assume $\int_{E} e^{f_i(y)}F_i(dy)<+\infty$ for $i=0,1.$ Then the conditional exponential moments $\psi_i(t):=\Exp_i[e^{X_t}]$ of the random variables $X_t$ satisfy
\begin{equation}\label{expZG}
\begin{split}
\psi_0(t)&=e^{t(\nu-\lambda)}\left[\cosh\bigl(t\sqrt{D}\,\bigr)
+\bigl(\mu-\zeta+\lambda_0\phi_0\bigr)\frac{\sinh\bigl(t\sqrt{D}\,\bigr)}{\sqrt{D}}\right],\\
\psi_1(t)&=e^{t(\nu-\lambda)}\left[\cosh\bigl(t\sqrt{D}\,\bigr)
-\bigl(\mu-\zeta-\lambda_1\phi_1\bigr)\frac{\sinh\bigl(t\sqrt{D}\,\bigr)}{\sqrt{D}}\right],
\end{split}
\end{equation}
where
\begin{equation*}
\mu:=\frac{\mu_0-\mu_1}{2},\quad \nu:=\frac{\mu_0+\mu_1}{2}, \quad \zeta:=\frac{\lambda_0-\lambda_1}{2},\quad
\phi_i:=\int_{E} e^{f_i(y)}F_i(dy), \ \ i=0,1
\end{equation*}
and $D=(\mu-\zeta)^2+\lambda_0\lambda_1\phi_0\phi_1$.
\end{theorem}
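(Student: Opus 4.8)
The plan is to copy the strategy of the proof of Theorem~\ref{thm:meanZG}, replacing the first moment by the exponential moment. Write $\psi_i(t)=\int_{\R}e^{x}p_i(t,x)\,dx$. The first thing to settle is that $\psi_i(t)<+\infty$ for every $t\in[0,T]$, so that the manipulations below are legitimate. Conditioning on the trajectory of the Markov chain $\eps(\cdot)$ and using that the jump amplitudes are, conditionally on the chain, independent, one gets
\[
\psi_i(t)=\Exp_i\Bigl[e^{\int_0^t\mu_{\eps(s)}\,ds}\prod_{n=1}^{N_t(E)}\phi_{\eps_n}\Bigr],
\]
where $\phi_0,\phi_1<+\infty$ by hypothesis; since $\bigl|\int_0^t\mu_{\eps(s)}\,ds\bigr|\le t\max(|\mu_0|,|\mu_1|)$ and $\Exp_i[\theta^{N_t(E)}]<+\infty$ for every $\theta>0$ (the number of transitions of the chain on $[0,t]$ is stochastically dominated by a Poisson variable), $\psi_i(t)$ is finite. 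The same estimate also gives sufficiently fast decay of $e^{x}p_i(t,x)$ at $\pm\infty$, which is what is needed to discard boundary terms in the integration by parts.

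Next I differentiate $\psi_i$ in $t$, substitute the PIDE system~\eqref{eq:dX}, and integrate by parts in $x$. For $i=0$ the transport term contributes $\mu_0\int_{\R}e^{x}\partial_x p_0\,dx=-\mu_0\psi_0(t)$, while in the nonlocal term the change of variable $x\mapsto x+f_0(y)$ yields $\int_E\bigl(\int_{\R}e^{x}p_1(t,x-f_0(y))\,dx\bigr)F_0(dy)=\int_Ee^{f_0(y)}\psi_1(t)\,F_0(dy)=\phi_0\,\psi_1(t)$; the case $i=1$ is symmetric. Together with $\psi_0(0)=\psi_1(0)=1$, which follows from $p_i(0,\cdot)=\delta$, this produces the linear Cauchy problem
\begin{align*}
\psi_0'(t)&=(\mu_0-\lambda_0)\,\psi_0(t)+\lambda_0\phi_0\,\psi_1(t),\\
\psi_1'(t)&=\lambda_1\phi_1\,\psi_0(t)+(\mu_1-\lambda_1)\,\psi_1(t),
\end{align*}
with $\psi_0(0)=\psi_1(0)=1$.

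Finally I solve this constant-coefficient $2\times2$ system explicitly. Its coefficient matrix can be written as $(\nu-\lambda)\,\id+B$ with
\[
B=\begin{pmatrix}\mu-\zeta & \lambda_0\phi_0\\[2pt]\lambda_1\phi_1 & -(\mu-\zeta)\end{pmatrix},
\]
using the identities $\mu_0-\nu=\mu$, $\mu_1-\nu=-\mu$, $\lambda-\lambda_0=-\zeta$, $\lambda-\lambda_1=\zeta$. Since $B$ has zero trace and determinant $-D$, Cayley--Hamilton gives $B^{2}=D\,\id$, hence $e^{tB}=\cosh(t\sqrt{D})\,\id+\frac{\sinh(t\sqrt{D})}{\sqrt{D}}\,B$. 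Applying $e^{t(\nu-\lambda)}e^{tB}$ to the column vector $(1,1)^{\top}$, and noting that $B(1,1)^{\top}=\bigl(\mu-\zeta+\lambda_0\phi_0,\ \lambda_1\phi_1-(\mu-\zeta)\bigr)^{\top}$, reproduces precisely the formulas~\eqref{expZG}.

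The only \emph{genuinely} delicate point is the first step --- justifying finiteness of $\psi_i(t)$ and the vanishing of the boundary terms in the integration by parts, equivalently the sufficiently fast tail decay of $p_i(t,\cdot)$. Once that is in place, everything reduces to the same routine computation as in Theorem~\ref{thm:meanZG}, with a $2\times2$ matrix exponential in place of a scalar one.
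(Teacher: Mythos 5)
Your proposal is correct and follows essentially the same route as the paper: differentiate $\psi_i(t)=\int e^x p_i(t,x)\,dx$, substitute the PIDE system \eqref{eq:dX}, integrate by parts to obtain the linear ODE system $\psi_0'=(\mu_0-\lambda_0)\psi_0+\lambda_0\phi_0\psi_1$, $\psi_1'=(\mu_1-\lambda_1)\psi_1+\lambda_1\phi_1\psi_0$ with $\psi_i(0)=1$, and solve it. You merely supply details the paper leaves implicit (finiteness of $\psi_i$, decay of the tails, and the explicit Cayley--Hamilton computation of the matrix exponential), all of which check out.
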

\begin{proof}
By definition, we have
\begin{equation*}
\psi_i(t)=\int_{-\infty}^\infty e^{x}p_i(t,x)d x,\quad i=0,1.
\end{equation*}
Differentiating the above equation, using the system \eqref{eq:dX} and integrating by parts, we obtain the following system of ODEs
\begin{equation*}
\begin{split}
\frac{d \psi_0}{d t}(t)&=(\mu_0-\lambda_0)\psi_0(t)+\lambda_0\phi_0\psi_1(t),\\
\frac{d \psi_{1}}{d t}(t)&=(\mu_1-\lambda_1)\psi_1(t)+\lambda_1\phi_1\psi_0(t),
\end{split}
\end{equation*}
with initial conditions $\psi_0(0)=\psi_1(0)=1$. The unique solution of this Cauchy problem is given by \eqref{expZG}.
\end{proof}
\begin{theorem}\label{teor:expHQ}
Suppose that $\int_E f_i(y)F_i(dy)<+\infty$ for $i=0,1$. Then the processes
\begin{equation}\label{eq:Ztilde}
L_t:=\sum_{n=1}^{N_t(E)}f_{\eps_n}\bigl(Y_{\eps_n,n}\bigr)-\int_0^t\int_{E}f_{\eps(s)}(y)\lambda_{\eps(s)}F_{\eps(s)}(d y)d s, \ \ t\geq0
\end{equation}
and
\begin{equation}
\mathcal{E}_t(L)=\prod_{n=1}^{N_t(E)}\bigl(1+f_{\eps_n}\bigl(Y_{\eps_n,n}\bigr)\bigr)\exp\left(-\int_0^t\int_{E}f_{\eps(s)}(y)\lambda_{\eps(s)}F_{\eps(s)}(d y)d s\right), \ \ t\geq0
\end{equation}
are $\Fil$-martingales.
\end{theorem}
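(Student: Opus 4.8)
The plan is to use the Markov-additive (regenerative) structure of the two-regime jump-telegraph process to reduce the martingale property of both $L$ and $\mathcal{E}(L)$ to two scalar identities, which are then settled by Theorem~\ref{thm:meanZG} and one extra linear Cauchy problem of the same type as in the proofs of Theorems~\ref{thm:meanZG}--\ref{thm:expZG}. Write $\Lambda_t:=\int_0^t\!\int_E f_{\eps(s)}(y)\lambda_{\eps(s)}F_{\eps(s)}(dy)\,ds=\int_0^t\lambda_{\eps(s)}\eta_{\eps(s)}\,ds$, so $L_t=\sum_{n=1}^{N_t(E)}f_{\eps_n}(Y_{\eps_n,n})-\Lambda_t$. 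Since $L$ has finite variation, continuous part $-\Lambda$ and jumps $\Delta L_{\tau_n}=f_{\eps_n}(Y_{\eps_n,n})>-1$, its Dol\'{e}ans-Dade exponential is exactly $\mathcal{E}_t(L)=e^{-\Lambda_t}\prod_{n=1}^{N_t(E)}(1+f_{\eps_n}(Y_{\eps_n,n}))>0$, which satisfies $d\mathcal{E}_t(L)=\mathcal{E}_{t-}(L)\,dL_t$.

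\emph{Reduction step.} Fix $0\le s\le t$. By memorylessness of the holding times \eqref{eq:T-n} and because the marks $Y_{\eps_n,n}$ are i.i.d.\ and independent of the chain, conditionally on $\calF_s$ the post-$s$ data $\{(\tau_n,Y_{\eps_n,n}):\tau_n>s\}$ together with $\{\eps(s+u)\}_{u\ge0}$ is an independent copy of the model started from the state $\eps(s)$ --- the time-$s$ version of \eqref{eq:edZ}, i.e.\ the strong Markov property of the pair $(X,\eps)$ (recall that $\eps(\cdot)$ is $\Fil$-adapted, with $\eps(s)$ determined by $\eps(0)$ and the parity of $N_s(E)$ in the two-regime case). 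Hence
\[
\Exp\bigl[L_t-L_s\mid\calF_s\bigr]=g_{\eps(s)}(t-s),\qquad
\Exp\bigl[\mathcal{E}_t(L)\mid\calF_s\bigr]=\mathcal{E}_s(L)\,\chi_{\eps(s)}(t-s),
\]
where $g_i(u):=\Exp_i[L_u]$ and $\chi_i(u):=\Exp_i[\mathcal{E}_u(L)]$. It therefore suffices to prove that $g_i\equiv0$ and $\chi_i\equiv1$ on $[0,T]$ for $i=0,1$.

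\emph{The two identities.} Since $L_u=X_u-\int_0^u d_{\eps(r)}\,dr$ with $d_i=\mu_i+\lambda_i\eta_i$, Theorem~\ref{thm:meanZG} gives $\Exp_i[X_u]=m_i(u)$, while a direct computation --- integrating the explicit two-state transition probabilities, or noting that $u\mapsto\int_0^u\Exp_i[d_{\eps(r)}]\,dr$ solves the same linear Cauchy problem as $m_i$ in the proof of Theorem~\ref{thm:meanZG} --- shows $\int_0^u\Exp_i[d_{\eps(r)}]\,dr=m_i(u)$ as well; hence $g_i\equiv0$. For $\chi_i$, conditioning on $\tau_1$ via \eqref{eq:edZ} and using $\Exp[1+f_i(Y_{i,1})]=1+\eta_i=\int_E(1+f_i(y))F_i(dy)>0$ (as $f_i>-1$) gives the coupled renewal equations
\[
\chi_0(t)=e^{-a_0 t}+a_0\!\int_0^t\! e^{-a_0 s}\chi_1(t-s)\,ds,\qquad
\chi_1(t)=e^{-a_1 t}+a_1\!\int_0^t\! e^{-a_1 s}\chi_0(t-s)\,ds,
\]
with $a_i:=\lambda_i(1+\eta_i)>0$; differentiating turns them into the linear system $\chi_0'=a_0(\chi_1-\chi_0)$, $\chi_1'=a_1(\chi_0-\chi_1)$ with $\chi_0(0)=\chi_1(0)=1$, whose unique solution is $\chi_0\equiv\chi_1\equiv1$. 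Routine estimates using $f_i\in L^1(F_i)$ and $\Exp_i[N_t(E)]<\infty$ supply the integrability needed for the conditioning above and for the differentiation.

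\emph{Main obstacle.} The substantive part is the $\chi_i$-step: identifying the correct regeneration at the first switch $\tau_1$ and recognising that the resulting $2\times2$ linear system admits the constant solution $(1,1)$. The $g_i$-step is essentially contained in Theorem~\ref{thm:meanZG}, and the reduction is the standard Markov-additive argument, though one should still check the (routine) integrability it relies on. Alternatively, once $L$ is known to be a martingale the martingale property of $\mathcal{E}(L)$ needs no further regeneration argument: $\mathcal{E}(L)$ is then a nonnegative local martingale, hence a supermartingale, and the identity $\Exp_i[\mathcal{E}_t(L)]=\chi_i(t)\equiv1$ upgrades it to a genuine $\Fil$-martingale.
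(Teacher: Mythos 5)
Your proof is correct and follows essentially the same route as the paper's: reduce the conditional expectations to unconditional ones via the regenerative (strong Markov) structure at time $s$, then verify $\Exp_i[L_u]=0$ and $\Exp_i[\mathcal{E}_u(L)]=1$. The only cosmetic difference is that where the paper observes $L$ and $\hat L_t=\log\mathcal{E}_t(L)$ are themselves jump-telegraph processes and plugs directly into the closed-form formulas of Theorems~\ref{thm:meanZG} and~\ref{thm:expZG}, you re-derive the corresponding first-jump renewal equations and linear ODE systems from scratch and note that they admit the constant solutions $0$ and $1$.
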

\begin{proof}
Observe that $L_t$ is a jump-telegraph process with $\mu_i=-\lambda_i\eta_i.$ Then, by Theorem \ref{thm:meanZG} we have
\begin{equation}\label{antZ}
\Exp_i[L_t]=\Exp_i\left[\sum_{n=1}^{N_t(E)}f_{\eps_n}(Y_{\eps_n,n})-\int_0^t\int_{E}f_{\eps(s)}(y)\lambda_{\eps(s)}F_{\eps(s)}(d y)d s\right]=0,\quad i=0,1.
\end{equation}
Let $0\le s < t$ be fixed. Let $i\in\{0,1\}$ be the value of $\eps(\cdot)$ at time $s$ and let $k\in\Na$ be the value of $N_s(E)$ at time $s.$ By the strong Markov property, we have the following conditional identities in distribution
\begin{equation}\label{eq:markovQ}
\begin{aligned}
\bigl.\eps(s+u)\,\bigr|_{\{\eps(s)=i\}}&\overset{d}{=}\bigl.\tilde{\eps}(u)\,\bigr|_{\{\tilde{\eps}(0)=i\}}, \\
\bigl.\tau_{n+k}\,\bigr|_{\{\eps(s)=i\}}&\overset{d}{=}\tilde{\tau}_{n}\,\bigr|_{\{\tilde{\eps}(0)=i\}},
\end{aligned}
\quad
\begin{aligned}
\bigl.N_{s+u}(E)\,\bigr|_{\{\eps(s)=i\}}&\overset{d}{=}N_s(E)+\bigl.\widetilde{N}_u(E)\,\bigr|_{\{\tilde{\eps}(0)=i\}}, \\
\bigl.Y_{\eps_{n+k},n+k}\,\bigr|_{\{\eps(s)=i\}}&\overset{d}{=}\bigl.Y_{\tilde{\eps}_n,n}\,\bigr|_{\{\tilde{\eps}(0)=i\}},
\end{aligned}
\quad
\begin{aligned}
&u\geq0,\\
&n\geq0,
\end{aligned}
\end{equation}
where $\tilde{\eps}$, $\widetilde{N}$, $\{\tilde{\tau}_k\}$ and $\{Y_{\tilde{\eps}_k,k}\}$
are copies of the processes $\eps$, $N$, $\{\tau_k\}$ and $\{Y_{\eps_k,k}\}$, respectively, independent of $\mathcal{F}_s.$
Then, using \eqref{antZ} and \eqref{eq:markovQ}, we obtain
\begin{equation*}
\Exp[L_t-L_s\mid\mathcal{F}_s]
=\Exp_i\left[\sum_{n=1}^{\widetilde{N}_{t-s}(E)}f_{\tilde{\eps}_n}(Y_{\tilde{\eps}_n,n})
-\int_0^{t-s}\int_{E}f_{\tilde{\eps}(u)}(y)\lambda_{\tilde{\eps}(u)}F_{\tilde{\eps}(u)}(d y)d u\right]
= 0,
\end{equation*}
and the first part follows. Now, if we define the jump-telegraph process
\begin{equation*}
\hat{L}_t:=\sum_{n=1}^{N_t(E)}\log\bigl(1+f_{\eps_n}(Y_{\eps_n,n})\bigr)-\int_0^t\int_{E}f_{\eps(s)}(y)\lambda_{\eps(s)}F_{\eps(s)}(dy)ds,
\end{equation*}
then we have $\mathcal{E}_t(L)=e^{\hat{L}_t}$ and by Theorem \ref{thm:expZG} we find that $\Exp_i[e^{\hat{L}_t}]=1$, $i=0,1$. Using this and \eqref{eq:markovQ}, we obtain
\begin{equation*}
\Exp\bigl[e^{\hat{L}_t-\hat{L}_s}\mid\mathcal{F}_s\bigr]
=\Exp_i\Biggl[\exp\Biggl(\,\sum_{n=1}^{\widetilde{N}_{t-s}(E)}\log\bigl(1+f_{\tilde{\eps}_n}(Y_{\tilde{\eps}_n,n})\bigr)
-\int_0^{t-s}\int_{E}f_{\tilde{\eps}(u)}(y)\lambda_{\tilde{\eps}(u)}F_{\tilde{\eps}(u)}(dy)du\Biggr)\Biggr]= 1,
\end{equation*}
and the desired result follows.
\end{proof}

\begin{remark}\label{remZ:m2}
Let $Z^\varphi$ denote the Radon-Nikodym densities defined in the previous section. Then $Z_t^\varphi=\mathcal{E}_t(J)$ with
\begin{equation*}
J_t:=\sum_{n=1}^{N_t(E)}\bigl(\varphi_{\eps_n}(Y_{\eps_n,n})-1\bigr)
-\int_0^t\int_E\bigl(\varphi_{\eps(s)}(y)-1\bigr)\lambda_{\eps(s)}F_{\eps(s)}(dy)ds, \ \ t\in [0,T].
\end{equation*}
By Theorem \ref{teor:expHQ}, if $m=2,$ it is enough to have $\int_E\varphi_i(y)F_i(dy)<+\infty, \ i=0,1,$ to guarantee that $Z^\varphi$ is a $\Fil$-martingale. In particular, $\Exp[Z_T^\varphi]=1$ and we have $\varphi\in\Theta.$
\end{remark}

\begin{corollary}\label{lemrhodis}
The compensator $\rho(dy,dt)$ of the $E$-marked point process $\gamma(dy,dt)$ satisfies $\rho(dy,dt)=F_{\eps(t-)}(dy)\lambda_{\eps(t-)}\,dt,$ a.s.
\end{corollary}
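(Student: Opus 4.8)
The plan is to show that the predictable random measure $F_{\eps(t-)}(dy)\lambda_{\eps(t-)}\,dt$ satisfies the two defining properties of the compensator of $\gamma(dy,dt)$, and then invoke uniqueness. Property (i) is immediate: for any $\Fil$-predictable $\phi(t,y)$, the process $t\mapsto\int_0^t\int_E\phi(s,y)F_{\eps(s-)}(dy)\lambda_{\eps(s-)}\,ds$ is adapted with continuous (hence left-continuous) paths, so it is $\Fil$-predictable. The substance is property (ii): I must verify that
\[
M_t(\phi)=\int_0^t\!\int_E\phi(s,y)\,\gamma(dy,ds)-\int_0^t\!\int_E\phi(s,y)F_{\eps(s-)}(dy)\lambda_{\eps(s-)}\,ds
\]
is an $\Fil$-local martingale for every $\Fil$-predictable $\phi$ for which the integrals make sense. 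It suffices to treat $\phi$ bounded (then pass to the general case by localization); in fact, by a monotone-class argument it is enough to check this for $\phi$ of the product form $\phi(s,y)=\mathbf 1_{(u,v]}(s)\,g(y)$ with $g$ bounded measurable on $E$, and even these reduce, by the identity $\mathbf 1_{(u,v]}=\mathbf 1_{(0,v]}-\mathbf 1_{(0,u]}$ and the tower property, to showing that $t\mapsto\int_0^t\!\int_E g(y)\,\gamma(dy,ds)-\int_0^t\!\int_E g(y)F_{\eps(s-)}(dy)\lambda_{\eps(s-)}\,ds$ is a martingale.

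The key step is to connect this to Theorem \ref{teor:expHQ}. That theorem, proved via the jump-telegraph moment formulas of Theorems \ref{thm:meanZG} and \ref{thm:expZG}, states precisely that the process
\[
L_t=\sum_{n=1}^{N_t(E)}f_{\eps_n}(Y_{\eps_n,n})-\int_0^t\!\int_E f_{\eps(s)}(y)\lambda_{\eps(s)}F_{\eps(s)}(dy)\,ds
\]
is an $\Fil$-martingale whenever $\int_E f_i\,dF_i<\infty$. The argument there never used any special property of the maps $f_i$ beyond integrability against $F_i$; hence the same proof applies verbatim with each $f_i$ replaced by an arbitrary bounded measurable function $g:E\to\R$ (independent of $i$, or even $i$-dependent), giving that $\sum_{n\le N_t(E)}g(Y_{\eps_n,n})-\int_0^t\!\int_E g(y)\lambda_{\eps(s)}F_{\eps(s)}(dy)\,ds$ is an $\Fil$-martingale. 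Since $\lambda_{\eps(s)}$ and $F_{\eps(s)}$ differ from $\lambda_{\eps(s-)}$ and $F_{\eps(s-)}$ only at the countably many jump times of $\eps(\cdot)$ — a Lebesgue-null set in $s$ — the $ds$-integral is unchanged if we use the predictable versions $\eps(s-)$. This yields exactly property (ii) for product-form $\phi$, and the monotone-class/localization bootstrap described above extends it to all admissible $\Fil$-predictable $\phi$. By the uniqueness of the compensator (up to a $\Prob$-null set), we conclude $\rho(dy,dt)=F_{\eps(t-)}(dy)\lambda_{\eps(t-)}\,dt$ a.s.

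The main obstacle is not conceptual but a matter of bookkeeping: making rigorous the reduction from general predictable integrands to the simple product form, and checking that the integrability/local-integrability hypotheses in clause (ii) of the definition of the compensator are genuinely met under the stated assumption $\int_E f_i\,dF_i<\infty$ (so that the candidate compensator has locally integrable, increasing total variation after taking absolute values). One should be slightly careful that the $g$-dependent martingale property really is what Theorem \ref{teor:expHQ} delivers — strictly speaking that theorem is stated only for $g=f_i$, so either one re-runs its proof with a general bounded $g$ (trivial, since Theorems \ref{thm:meanZG}–\ref{thm:expZG} hold for any integrable/exponentially-integrable jump functions) or one remarks that the proof is identical. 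As a byproduct, once $\rho$ has this form one reads off that $N_t(A)$ has predictable intensity $F_{\eps(t-)}(A)\lambda_{\eps(t-)}$, which justifies the Remark following Assumption \ref{A1}.
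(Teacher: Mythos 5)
Your proposal is correct and follows essentially the same route as the paper: both proofs rest on applying Theorem \ref{teor:expHQ} with the market-model jump maps $f_i$ replaced by indicator-type (or bounded) functions to obtain the martingale property of $N_t(A)-\int_0^t F_{\eps(s-)}(A)\lambda_{\eps(s-)}\,ds$, and then extend to general predictable integrands. The only difference is cosmetic: the paper delegates the final bootstrap to Br\'emaud's Integration Theorem (Corollary T4) after identifying $F_{\eps(t-)}(A)\lambda_{\eps(t-)}$ as the stochastic intensity of $N_t(A)$, whereas you carry out the monotone-class/localization extension and invoke uniqueness of the compensator directly.
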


\begin{proof}
Using Theorem \ref{teor:expHQ} with $f_i=\mathbf{1}_A,$ for $A\in\B(E),$ it follows that the process $M_t(A)$ defined as
\begin{align*}
 M_t(A):&=N_t(A)-\int_0^tF_{\eps(s-)}(A)\lambda_{\eps(s-)}ds\\
  &=\sum_{n=1}^{N_t(E)}\mathbf{1}_{\set{Y_{\eps_n,n}\in A}}-\int_0^t\int_A F_{\eps(s)}(dy)\,\lambda_{\eps(s)}\,ds\\
  &=\sum_{n=1}^{N_t(E)}\mathbf{1}_A(Y_{\eps_n,n})-\int_0^t\int_E\mathbf{1}_A(y)F_{\eps(s)}(dy)\,\lambda_{\eps(s)}\,ds, \ \ t\geq 0,
\end{align*}
is a $\Fil$-martingale. Then, for all bounded non-negative  $\Fil$-predictable process $\set{\phi_t}_{t\geq 0},$ the stochastic integral
\[
\int_0^t\phi_s\,dM_s(A)=\int_0^t\phi_s\,dN_s(A)-\int_0^t\phi_sF_{\eps(s)}(A)\lambda_{\eps(s)}ds, \ \ t\geq 0,
\]
is also a $\Fil$-martingale. By the Monotone Convergence Theorem, we have
\[
\Exp\left[\int_0^\infty\phi_s\,dN_s(A)\right]=\Exp\left[\int_0^\infty\phi_sF_{\eps(s-)}(A)\lambda_{\eps(s-)}\,ds\right].
\]
Hence, the counting process $N_t(A)$ is an inhomogeneous Poisson process with (Markov modulated) stochastic intensity  $F_{\eps(t-)}(A)\lambda_{\eps(t-)}.$ The desired result follows from Corollary T4 (Integration Theorem) in Br\'{e}maud \cite[Chapter VIII]{bremaud}.
\end{proof}



\section{Examples}
\subsection{Logarithmic utility}
We illustrate the main result first by considering logarithmic utility functions.
\begin{lemma}\label{betalog}
Let $U_1(t,x)=U_2(x)=\ln x.$ Then, for all $\varphi\in\tilde{\Theta}$ and $x>0$ we have $\beta^{x,\varphi}(t,y)=0,$ a.s. for $\rho$-a.e. $(t,y)\in [0,T]\times E.$
\end{lemma}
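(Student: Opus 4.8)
The plan is to identify explicitly the martingale $M^{x,\varphi}$ in the logarithmic case and observe that it is constant, which forces the martingale representation coefficient $\beta^{x,\varphi}$ to vanish. First I would compute the inverse marginal utilities: for $U_1(t,x)=U_2(x)=\ln x$ we have $U_1'(t,x)=U_2'(x)=1/x$, hence $I_1(t,y)=I_2(y)=1/y$. Plugging into the definition of $\calX_i^\varphi$ gives
\[
\calX_i^\varphi(y)=\Exp_i\left[\int_0^T H_t^\varphi\,\frac{1}{yH_t^\varphi}\,dt+H_T^\varphi\,\frac{1}{yH_T^\varphi}\right]=\frac{T+1}{y},
\]
so that $\calX_i^\varphi$ is finite for every $y>0$ (hence $\varphi\in\widetilde\Theta$ for every $\varphi\in\Theta$) and $\Y_i^\varphi(x)=(\calX_i^\varphi)^{-1}(x)=(T+1)/x$.

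Next I would substitute this into the formulas (\ref{cY}) for $c^{x,\varphi}$ and $G^{x,\varphi}$. One gets
\[
c_t^{x,\varphi}=I_1\bigl(t,\Y_i^\varphi(x)H_t^\varphi\bigr)=\frac{1}{\Y_i^\varphi(x)H_t^\varphi}=\frac{x}{(T+1)H_t^\varphi},\qquad
G^{x,\varphi}=\frac{1}{\Y_i^\varphi(x)H_T^\varphi}=\frac{x}{(T+1)H_T^\varphi}.
\]
Therefore $H_s^\varphi c_s^{x,\varphi}=x/(T+1)$ for every $s$, and $H_T^\varphi G^{x,\varphi}=x/(T+1)$, both \emph{deterministic} constants. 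Then for the martingale $M^{x,\varphi}$ we obtain
\[
M_t^{x,\varphi}=\Exp\left[H_T^\varphi G^{x,\varphi}+\int_0^T H_s^\varphi c_s^{x,\varphi}\,ds\,\Bigl|\,\calF_t\right]
=\Exp\left[\frac{x}{T+1}+\int_0^T\frac{x}{T+1}\,ds\,\Bigl|\,\calF_t\right]=x,
\]
i.e. $M^{x,\varphi}$ is the constant process equal to $x$.

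Finally I would invoke the martingale representation: since $M_t^{x,\varphi}=x+\int_0^t\int_E\beta^{x,\varphi}(s,y)\,\widetilde\gamma(dy,ds)$ and we have just shown $M_t^{x,\varphi}\equiv x$, the stochastic integral $\int_0^t\int_E\beta^{x,\varphi}(s,y)\,\widetilde\gamma(dy,ds)$ is identically zero, so by essential uniqueness of the representation coefficient (Theorem T8 in Section VIII of Br\'emaud \cite{bremaud}) we conclude $\beta^{x,\varphi}(t,y)=0$ for $\rho$-a.e. $(t,y)\in[0,T]\times E$, a.s. There is no serious obstacle here; the only minor point requiring care is that the computation of $\calX_i^\varphi(y)$ genuinely collapses because the factors $H_t^\varphi$ cancel against $I(yH_t^\varphi)=1/(yH_t^\varphi)$, which is the special feature of logarithmic utility, and that one should note $\widetilde\Theta=\Theta$ so the statement is non-vacuous.
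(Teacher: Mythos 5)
Your proposal is correct and follows essentially the same route as the paper's own proof: compute $I_1(t,y)=I_2(y)=1/y$, deduce $\calX_i^\varphi(y)=(T+1)/y$ and $\Y_i^\varphi(x)=(T+1)/x$, observe that $H_s^\varphi c_s^{x,\varphi}$ and $H_T^\varphi G^{x,\varphi}$ are the constant $x/(T+1)$ so that $M_t^{x,\varphi}\equiv x$, and conclude $\beta^{x,\varphi}=0$ by essential uniqueness of the martingale representation. Your added remark that $\widetilde\Theta=\Theta$ in this case is a nice observation the paper leaves implicit.
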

\begin{proof}
In this case, we have $I_1(t,y)=I_2(y)=1/y$ and $\calX^\varphi(y)=(T+1)/y,$ for $y\in(0,\infty).$ Then, $\Y^\varphi(x)=(T+1)/x,$ for $x\in(0,\infty),$
\begin{align}
  c_t^{x,\varphi} &=\frac{x}{(T+1)H_t^\varphi}, \ \ t\in [0,T],\label{clog}\\
  G^{x,\varphi}&=\frac{x}{(T+1)H_T^\varphi}.\notag
\end{align}
Hence, $M_t^{x,\varphi}=x$ for all $t\in [0,T],$ and the desired result follows.
\end{proof}

\begin{theorem}\label{thmlog}
Let $x$ be fixed. Suppose  Assumption \ref{A1} holds true and that for each $i\in\calI,$ there exists $\bar{\pi}_i$ satisfying $1+\bar{\pi}_if_i(y)>0$ for all $y\in E$ and
\begin{equation}\label{pif2}
\mu_i-r_i+\lambda_i\int_E \frac{f_i(y)}{1+\bar\pi_i f_i(y)}\,F_i(dy)=0.
\end{equation}
Suppose further there exists $p>1$ such that
\[
\int_E \frac{1}{[1+\bar\pi_i f_i(y)]^{p}}\,F_i(dy)<+\infty, \ \ \forall i\in\calI.
\]
Let
\[
\hat{\pi}_t:=\bar\pi_{\eps(t-)} \ \mbox{ and } \ \ \hat{c}_t:=\frac{V_t^{x,\hat{\pi},0}}{T+1}, \ \ t\in [0,T].
\]
Then
\begin{itemize}
   \item[(a)] The portfolio-consumption pair $(\hat{\pi},\hat{c})$ is optimal for $U_1(t,x)=U_2(x)=\ln x,$

 \item[(b)] The optimal wealth process $V^{x,\hat{\pi},\hat{c}}$ satisfies
 \[
 V^{x,\hat{\pi},\hat{c}}_t=V_t^{x,\hat{\pi},0}-t\hat{c}_t=V_t^{x,\hat{\pi},0}\Bigl(1-\frac{t}{T+1}\Bigr), \ \ t\in [0,T],
 \]

 \item[(c)] The optimal value function satisfies
\[
\vartheta_i(x)=(T+1)\ln x-(T+1)\ln (T+1)-\Exp_i\left[\int_0^T\ln V_t^{1,\hat{\pi},0}\,dt+\ln V_T^{1,\hat{\pi},0}\right], \ \ x>0, \ \ i\in\calI.
\]
\end{itemize}
\end{theorem}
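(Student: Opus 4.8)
The plan is to verify that the choice $(\hat\pi,\hat c)$ is precisely the one produced by Theorem~\ref{main} once we specialize to logarithmic utility, and then to read off (a)--(c) from that theorem together with the closed-form expressions already available in the excerpt. First I would exhibit a candidate $\hat\varphi\in\tilde\Theta$: set $\hat\varphi_i(y):=\dfrac{1}{1+\bar\pi_i f_i(y)}$. Then $1+\bar\pi_i f_i(y)>0$ guarantees $\hat\varphi_i>0$, the integrability hypothesis with exponent $p>1$ gives $\int_E\hat\varphi_i(y)^p F_i(dy)<\infty$, so by Remark~\ref{martcondZ} the density $Z^{\hat\varphi}$ is a true $\Fil$-martingale; and equation~(\ref{pif2}) is exactly the martingale condition~(\ref{martcond}) for this $\hat\varphi$. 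Hence $\hat\varphi\in\Theta$. Since for logarithmic utility $\calX_i^\varphi(y)=(T+1)/y<\infty$ for every $\varphi\in\Theta$ (computed in Lemma~\ref{betalog}), in fact $\hat\varphi\in\tilde\Theta$, indeed $\tilde\Theta=\Theta$ here.

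Next I would check that the optimality condition~(\ref{pif}) of Theorem~\ref{main} holds with this $\hat\varphi$ and with the constant-in-$y$ portfolio $\hat\pi_t=\bar\pi_{\eps(t-)}$. By Lemma~\ref{betalog}, $\beta^{x,\hat\varphi}\equiv 0$, so the right-hand side of~(\ref{pif}) is simply $1/\hat\varphi_{\eps(t-)}(y)=1+\bar\pi_{\eps(t-)}f_{\eps(t-)}(y)=\hat\pi_t f_{\eps(t-)}(y)+1$, which matches the left-hand side identically. It remains to confirm that the wealth equation~(\ref{eqV}) has a solution for $(\hat\pi,\hat c)$ with $\hat c=c^{x,\hat\varphi}$. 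From~(\ref{clog}), $c_t^{x,\hat\varphi}=\dfrac{x}{(T+1)H_t^{\hat\varphi}}$, and by part~(b) of Theorem~\ref{main} the optimal wealth equals $Y_t^{x,\hat\varphi}/H_t^{\hat\varphi}$; since $M_t^{x,\hat\varphi}\equiv x$ we get $Y_t^{x,\hat\varphi}=x-\int_0^t H_s^{\hat\varphi}c_s^{x,\hat\varphi}\,ds=x\bigl(1-\tfrac{t}{T+1}\bigr)$, so $X_t^{x,\hat\varphi}=\dfrac{x}{H_t^{\hat\varphi}}\bigl(1-\tfrac{t}{T+1}\bigr)$. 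On the other hand, because $\hat\varphi_{\eps(t-)}(y)=1/(1+\bar\pi_{\eps(t-)}f_{\eps(t-)}(y))$, the state-price density $H^{\hat\varphi}$ is (the reciprocal of) the discounted linear wealth process with portfolio $\bar\pi$: comparing the Dol\'eans--Dade exponentials, $1/H_t^{\hat\varphi}=V_t^{1,\hat\pi,0}$. Therefore $X_t^{x,\hat\varphi}=x\,V_t^{1,\hat\pi,0}\bigl(1-\tfrac{t}{T+1}\bigr)=V_t^{x,\hat\pi,0}\bigl(1-\tfrac{t}{T+1}\bigr)$, which is manifestly a well-defined nonnegative process solving~(\ref{eqV}) for $(\hat\pi,\hat c)$; this simultaneously establishes part~(b) of the present theorem and the needed existence. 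With all hypotheses of Theorem~\ref{main} verified, part~(a) is immediate.

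For part~(c), I would use part~(c) of Theorem~\ref{main}: $\vartheta_i(x)=\mathcal K_i^{\hat\varphi}(\mathcal Y_i^{\hat\varphi}(x))$ with $\mathcal Y_i^{\hat\varphi}(x)=(T+1)/x$ and $I_1(t,y)=I_2(y)=1/y$, so
\begin{align*}
\vartheta_i(x)&=\Exp_i\Bigl[\int_0^T\ln\!\Bigl(\tfrac{x}{(T+1)H_t^{\hat\varphi}}\Bigr)dt+\ln\!\Bigl(\tfrac{x}{(T+1)H_T^{\hat\varphi}}\Bigr)\Bigr]\\
&=(T+1)\ln x-(T+1)\ln(T+1)-\Exp_i\Bigl[\int_0^T\ln H_t^{\hat\varphi}\,dt+\ln H_T^{\hat\varphi}\Bigr],
\end{align*}
and substituting $H_t^{\hat\varphi}=1/V_t^{1,\hat\pi,0}$ turns the last expectation into $-\Exp_i[\int_0^T\ln V_t^{1,\hat\pi,0}\,dt+\ln V_T^{1,\hat\pi,0}]$, giving the claimed formula. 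The one point that needs genuine care --- the main obstacle --- is the identity $H_t^{\hat\varphi}=1/V_t^{1,\hat\pi,0}$: one must match the explicit exponential-product representation of $H^{\hat\varphi}$ displayed just before the Proposition against formula~(\ref{Vpi0}) for $V^{1,\hat\pi,0}$, checking that the drift terms $(1-h_{\eps(s)}^{\hat\varphi})\lambda_{\eps(s)}-r_{\eps(s)}$ and the jump factors $\ln\hat\varphi_{\eps(s-)}(y)=-\ln(1+\bar\pi_{\eps(s-)}f_{\eps(s-)}(y))$ combine correctly; the drift match uses~(\ref{pif2}) to rewrite $h_i^{\hat\varphi}\lambda_i=\lambda_i\int_E\hat\varphi_i\,F_i(dy)$ in terms of $\mu_i,r_i,\bar\pi_i$. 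Everything else is bookkeeping.
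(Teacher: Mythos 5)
Your proposal is correct and follows essentially the same route as the paper's proof: the same candidate $\hat{\varphi}_i(y)=1/(1+\bar{\pi}_if_i(y))$, membership in $\tilde{\Theta}$ via Remark \ref{martcondZ} and condition (\ref{pif2}), verification of the optimality condition (\ref{pif}) through Lemma \ref{betalog}, and the identification $1/H^{\hat{\varphi}}=V^{1,\hat{\pi},0}$ (the paper obtains this by computing the SDE satisfied by $1/H^{\hat{\varphi}}$, while you match the explicit Dol\'eans--Dade exponential representations; the two computations are equivalent and both hinge on (\ref{pif2})). One caveat on part (c): your own algebra actually yields $+\Exp_i\bigl[\int_0^T\ln V_t^{1,\hat{\pi},0}\,dt+\ln V_T^{1,\hat{\pi},0}\bigr]$ rather than the minus sign in the stated formula, since $\ln H_t^{\hat{\varphi}}=-\ln V_t^{1,\hat{\pi},0}$ makes the two minus signs cancel; this agrees with a direct evaluation of $J_i(x;\hat{\pi},\hat{c})$ using $\hat{c}_t=xV_t^{1,\hat{\pi},0}/(T+1)$ (and with the sanity check that a higher interest rate should increase the value), so it points to a sign typo in the theorem's statement rather than a flaw in your method---but you should not have written ``giving the claimed formula'' at the one step where the signs do not in fact match.
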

\begin{proof}
 Define $\hat\varphi_i(y):=1/[1+\bar{\pi}_if_i(y)].$ By (\ref{pif2}) and Remark \ref{martcondZ}, the process $\hat{\varphi}$ belongs to $\tilde{\Theta}.$ By Lemma \ref{betalog}, $\hat\varphi$ and $\hat\pi$ satisfy the assumptions of Theorem \ref{main}. Then the pair $(\hat\pi,c^{x,\hat{\varphi}})$ is optimal.

Using again (\ref{pif2}), we see that the differential of $(H_t^{\hat{\varphi}})^{-1}$ satisfies
\begin{align*}
  d\Bigl(\frac{1}{H^{\hat{\varphi}}_t}\Bigr)
  &=\frac{1}{H^{\hat{\varphi}}_{t-}}\Bigl\{\Bigl[r_{\eps(t)}
  +\lambda_{\eps(t-)}\int_E(\hat{\varphi}_{\eps(t-)}(y)-1)\,F_{\eps(t-)}(dy)\Bigr]\,dt+\int_E\Bigl(\frac{1}{\hat{\varphi}_{\eps(t-)}(y)}-1\Bigr)\,\gamma(dy,dt)\Bigr\}\\
  &=\frac{1}{H^{\hat{\varphi}}_{t-}}\left\{r_{\eps(t)}\,dt+\bar\pi_{\eps(t-)}\left[\bigl(\mu_{\eps(t)}-r_{\eps(t)}\bigr)\,dt+\int_{E} f_{\eps(t-)}(y)\,\gamma(dy,dt)\right]\right\}.
\end{align*}
Hence, the process $(H_t^{\hat{\varphi}})^{-1}$ is a modification of $V_t^{1,\hat{\pi},0}.$ In view of (\ref{clog}), we conclude  $\hat{c}=c^{x,\hat{\varphi}},$ and (a) follows. Assertion (b) follows from (\ref{xiV}) and (c) follows from Theorem \ref{main}, part (c).
\end{proof}
Finally, we consider the case of two regimes for the underlying Markov chain.
\begin{corollary}
Let now $m=2.$ Assume that for each $i=0,1$ there exists $\bar{\pi}_i$ satisfying $1+\bar{\pi}_if_i(y)>0$ for all $y\in E$ as well as condition (\ref{pif2}). Assume further
\[
\int_E \left[\frac{1}{1+\bar\pi_i f_i(y)}+\ln(1+\bar\pi_if_i(y))\right]\,F_i(dy)<+\infty, \ \ i=0,1.
\]
Then the optimal value functions $\vartheta_i(x),$ $i=0,1,$ satisfy
\begin{align*}
\vartheta_0(x)&=(T+1)\ln x-(T+1)\ln (T+1)\\
&\phantom{=}-\frac{1}{2\lambda}\left\{(\lambda_1\bar{d}_0+\lambda_0\bar{d}_1)\Bigl(T+\frac{T^2}{2}\Bigr)
+\frac{\lambda_0(\bar{d}_0-\bar{d}_1)}{2\lambda}\Bigl[T+\bigl(1-e^{-2\lambda T}\bigr)\Bigl(1+\frac{1}{2\lambda}\Bigr)\Bigr]\right\}
\end{align*}
and
\begin{align*}
\vartheta_1(x)&=(T+1)\ln x-(T+1)\ln (T+1)\\
&\phantom{=}-\frac{1}{2\lambda}\left\{(\lambda_1\bar{d}_0+\lambda_0\bar{d}_1)\Bigl(T+\frac{T^2}{2}\Bigr)
-\frac{\lambda_1(\bar{d}_0-\bar{d}_1)}{2\lambda}\Bigl[T+\bigl(1-e^{-2\lambda T}\bigr)\Bigl(1+\frac{1}{2\lambda}\Bigr)\Bigr]\right\}
\end{align*}
where
\begin{equation*}
2\lambda:=\lambda_0+\lambda_1,\quad \bar\eta_i:=\int_{E} \ln(1+\bar\pi_if_i(y))\,F_i(dy),\quad\bar{d}_i:=\bar{\pi}_i\mu_i+(1-\bar\pi_i)r_i+\lambda_i\bar{\eta}_i, \ \ i=0,1.
\end{equation*}
\end{corollary}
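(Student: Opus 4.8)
The plan is to specialize Theorem \ref{thmlog}(c) to the two-state chain and evaluate the expectation explicitly using the moment formulas from Section 4. First I would observe that, by Theorem \ref{thmlog}(c), we have
\[
\vartheta_i(x)=(T+1)\ln x-(T+1)\ln(T+1)-\Exp_i\Bigl[\int_0^T\ln V_t^{1,\hat\pi,0}\,dt+\ln V_T^{1,\hat\pi,0}\Bigr],
\]
so the whole task reduces to computing $\Exp_i[\ln V_t^{1,\hat\pi,0}]$ for each $t$ and then integrating in $t$ and adding the terminal term. From the explicit formula \eqref{Vpi0} with initial wealth $1$ and $\hat\pi_t=\bar\pi_{\eps(t-)}$,
\[
\ln V_t^{1,\hat\pi,0}=\int_0^t\bigl[\bar\pi_{\eps(s)}\mu_{\eps(s)}+(1-\bar\pi_{\eps(s)})r_{\eps(s)}\bigr]\,ds+\sum_{n=1}^{N_t(E)}\ln\bigl(1+\bar\pi_{\eps_n}f_{\eps_n}(Y_{\eps_n,n})\bigr).
\]
The key point is that the right-hand side is exactly a jump-telegraph process of the form \eqref{Tt}: the drift in regime $i$ is $\bar\pi_i\mu_i+(1-\bar\pi_i)r_i$ and the jump map is $y\mapsto\ln(1+\bar\pi_i f_i(y))$, whose $F_i$-mean is precisely $\bar\eta_i$. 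Hence its conditional expectation is governed by Theorem \ref{thm:meanZG}, with $\mu_i$ replaced by $\bar\pi_i\mu_i+(1-\bar\pi_i)r_i$ and $\eta_i$ replaced by $\bar\eta_i$, so the relevant parameter is $\bar d_i:=\bar\pi_i\mu_i+(1-\bar\pi_i)r_i+\lambda_i\bar\eta_i$, matching the statement.

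Next I would invoke Theorem \ref{thm:meanZG} to write, for $i=0,1$,
\[
m_0(t):=\Exp_0[\ln V_t^{1,\hat\pi,0}]=\frac{1}{2\lambda}\Bigl[(\lambda_1\bar d_0+\lambda_0\bar d_1)t+\lambda_0(\bar d_0-\bar d_1)\Bigl(\tfrac{1-e^{-2\lambda t}}{2\lambda}\Bigr)\Bigr],
\]
and similarly for $m_1(t)$ with the sign of the second term flipped and $\lambda_0$ replaced by $\lambda_1$. One must first check the integrability hypothesis of Theorem \ref{thm:meanZG}, namely $\int_E|\ln(1+\bar\pi_i f_i(y))|\,F_i(dy)<\infty$, which follows from the assumed finiteness of $\int_E[\,(1+\bar\pi_i f_i(y))^{-1}+\ln(1+\bar\pi_i f_i(y))\,]\,F_i(dy)$ together with $f_i(y)>-1$ (the logarithm is bounded above by a linear function and controlled below by the reciprocal term). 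Then I would compute $\Exp_i[\int_0^T\ln V_t^{1,\hat\pi,0}\,dt]=\int_0^T m_i(t)\,dt$ by Fubini, using
\[
\int_0^T t\,dt=\frac{T^2}{2},\qquad \int_0^T\frac{1-e^{-2\lambda t}}{2\lambda}\,dt=\frac{1}{2\lambda}\Bigl[T+\frac{e^{-2\lambda T}-1}{2\lambda}\Bigr]=\frac{1}{2\lambda}\Bigl[T-\frac{1-e^{-2\lambda T}}{2\lambda}\Bigr],
\]
and add the terminal contribution $m_i(T)$. Collecting the coefficient of $(\lambda_1\bar d_0+\lambda_0\bar d_1)$ gives $\frac{1}{2\lambda}(\frac{T^2}{2}+T)$, and collecting the coefficient of $\lambda_0(\bar d_0-\bar d_1)$ (for $i=0$) gives $\frac{1}{(2\lambda)^2}[T-\frac{1-e^{-2\lambda T}}{2\lambda}]+\frac{1}{2\lambda}\cdot\frac{1-e^{-2\lambda T}}{2\lambda}=\frac{1}{(2\lambda)^2}[T+(1-e^{-2\lambda T})(1+\frac{1}{2\lambda})]$, which is exactly the bracket appearing in the claimed formula; the $i=1$ case is identical with the opposite sign and $\lambda_0\rightsquigarrow\lambda_1$.

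The only genuinely non-routine step is the first one: recognizing that $\ln V_t^{1,\hat\pi,0}$ is a jump-telegraph process in the precise sense of \eqref{Tt} so that Theorem \ref{thm:meanZG} applies verbatim with the substituted parameters. Everything after that is bookkeeping — verifying the integrability condition and carrying out two elementary $t$-integrals — so I expect no real obstacle beyond keeping track of the constants. I would close by substituting back into the formula from Theorem \ref{thmlog}(c) and simplifying to obtain the stated expressions for $\vartheta_0(x)$ and $\vartheta_1(x)$.
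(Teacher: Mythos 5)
Your proposal is correct and follows essentially the same route as the paper: recognize $\ln V_t^{1,\hat\pi,0}$ as a jump-telegraph process with tendencies $\bar\pi_i\mu_i+(1-\bar\pi_i)r_i$ and jump means $\bar\eta_i$, apply Theorem \ref{thm:meanZG} with $\bar d_i$ in place of $d_i$, interchange expectation and time integral, and substitute into Theorem \ref{thmlog}(c); the paper's proof is exactly this, stated more tersely. One caveat: your final arithmetic equality does not follow from your own displayed summands, since $\frac{1}{(2\lambda)^2}\bigl[T-\frac{1-e^{-2\lambda T}}{2\lambda}\bigr]+\frac{1-e^{-2\lambda T}}{(2\lambda)^2}=\frac{1}{(2\lambda)^2}\bigl[T+(1-e^{-2\lambda T})\bigl(1-\frac{1}{2\lambda}\bigr)\bigr]$, with a minus sign rather than the plus sign appearing in the corollary's bracket — so either that step needs a plus-to-minus correction or the stated formula itself carries a sign typo that your computation has silently absorbed.
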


\begin{proof}
Observe that
\[
\ln V_t^{1,\hat{\pi},0}=\int_0^t\left[\bar{\pi}_{\eps(s)}\mu_{\eps(s)}+\left(1-\bar\pi_{\eps(s)}\right)r_{\eps(s)}\right]\,ds
+\sum_{n=1}^{N_t(E)}\ln(1+\bar{\pi}_{\eps_n}f_{\eps_n}(Y_{\eps_n,n})), \ \ t\in[0,T],
\]
is a jump-telegraph process with alternating tendencies $\bar{\pi}_i\mu_i+(1-\bar\pi_i)r_i,$ $i=0,1.$ The desired result follows noting that
\[
\Exp_i\left[\int_0^T\ln V_t^{1,\hat{\pi},0}\,dt\right]=\int_0^T\Exp_i\bigl[\ln V_t^{1,\hat{\pi},0}]\,dt
\]
and using Theorem \ref{thm:meanZG} and Theorem \ref{thmlog}, part (c).
\end{proof}

\begin{example}
To illustrate the above result, let us assume $\mu_1<0<\mu_0$, $f_0(y)=f_1(y)=y$, $r_0,r_1\geq0$ and $r_0<\mu_0$. For regime $i=0$ we fix a parameter $\eta_0>0$ and assume $Y_{0,n}$ is supported on the interval $(-1,0)$ with distribution
\[
F_0(dy)=\eta_0(1+y)^{\eta_0-1}\mathbf{1}_{(-1,0)}(y)\,dy.
\]
The expected value is given by $\Exp[Y_{0,n}]=-\frac{1}{1+\eta_0}<0$. Notice that  the random variable $-V_{0,n}:=-\ln(1+Y_{0,n})$ is exponentially distributed with density function $\eta_0e^{-\eta_0v}, \ v>0.$

For regime $i=1$ we fix another parameter $\eta_1>1$ and assume $Y_{1,n}$ is supported on the interval $(0,\infty)$ with distribution
\[
F_1(dy)=\eta_1(1+y)^{-(\eta_1+1)}\mathbf{1}_{(0,\infty)}(y)\,dy.
\]
The expected value is given by $\Exp[Y_{1,n}]=\frac{1}{\eta_1-1}>0.$ In this case $V_{1,n}:=\ln(1+Y_{1,n})$ is exponentially distributed with density function $\eta_1e^{-\eta_1v}, \ v>0.$

For each regime we consider the following portfolio constraints: for $i=0$ we restrict $\pi$ to the interval $(-\infty,1),$ that is, borrowing (or short-selling of the money account) is not allowed, and for $i=1$ we restrict $\pi$ to the interval $(0,\infty),$ that is, short-selling of the risky asset is not allowed.

In view of these constraints, we define
\begin{equation}
\begin{split}
g_0(\pi):&=\mu_0-r_0+\lambda_0\int_{E}\frac{y}{1+\pi y}F_0(dy) \\
&=\mu_0-r_0+\lambda_0\eta_0\int_{-1}^0\frac{y(1+y)^{\eta_0-1}}{1+\pi y}dy,\quad \pi\in (-\infty,1),
\end{split}
\end{equation}
and
\begin{equation}
\begin{split}
g_1(\pi):&=\mu_1-r_1+\lambda_1\int_{E}\frac{y}{1+\pi y}F_1(dy) \\
&=\mu_1-r_1+\lambda_1\eta_1\int_{0}^\infty\frac{y(1+y)^{-(\eta_1+1)}}{1+\pi y}dy,\quad \pi\in (0,\infty).
\end{split}
\end{equation}
Both maps $g_0$ and $g_1$ are strictly decreasing on their respective domains
\begin{align*}
g_0'(\pi)&=-\lambda_0\eta_0\int_{-1}^0\frac{y^2(1+y)^{\eta_0-1}}{(1+\pi y)^2}dy<0, \ \ \forall \pi\in (-\infty,1), \\
g_1'(\pi)&=-\lambda_1\eta_1\int_{0}^\infty\frac{y^2(1+y)^{-(\eta_1+1)}}{(1+\pi y)^2}dy<0, \ \ \forall \pi\in (0,\infty).
\end{align*}

\begin{figure}[hb]
\centering
\includegraphics[scale=0.75]{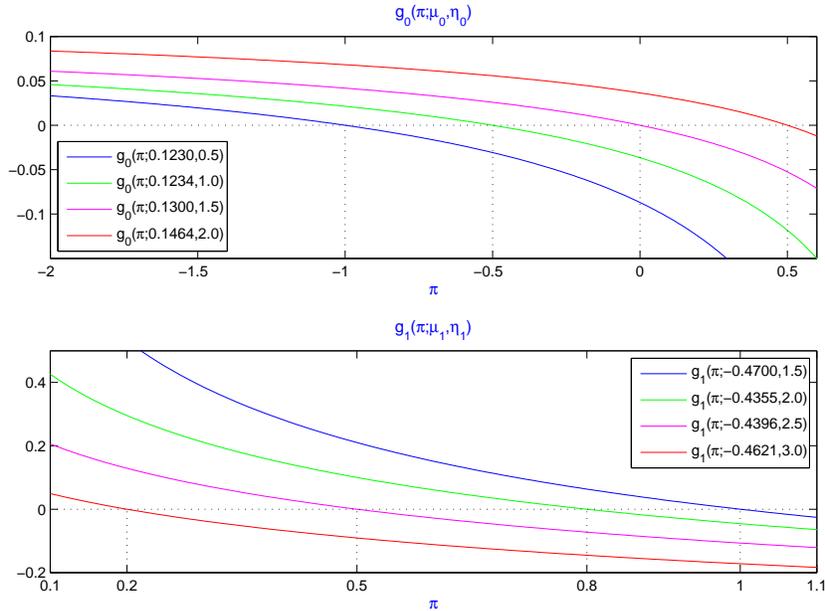}
\caption{\small Plots of $g_i(\pi)$ for  different values of $\mu_i$ and $\eta_i$.}
\end{figure}
Moreover,
\begin{equation*}
\lim_{\pi\to -\infty}g_0(\pi)=\mu_0-r_0>0, \quad \lim_{\pi\to +\infty}g_1(\pi)=\mu_1-r_1<0.
\end{equation*}
Hence, if there exists a pair $(\tilde\pi_0,\tilde\pi_1)$ satisfying $g_0(\tilde\pi_0)<0$ y $g_1(\tilde\pi_1)>0$, we can guarantee existence of solutions to equations $g_0(\pi)=0$ and $g_1(\pi)=0.$

We solved these equations numerically with $r_0=r_1=1\%,$ $\lambda_0=0.3$ and $\lambda_1=1.2.$ Figure 2 (above) shows plots of $g_i(\pi)$ for four different values of $\mu_i$ and $\eta_i,$ which have been chosen so that the optimal portfolio proportions are given by $\bar\pi_0=-1.0, -0.5, 0.0, 0.5$ and $\bar\pi_1=0.2, 0.5, 0.8, 1.0$ for each regime respectively. Figure 3 (below) plots the optimal portfolio proportion $\bar\pi_i$ as a function of $\mu_i,$ for different values of $\eta_i.$

\begin{figure}[hb]
\centering
\includegraphics[scale=0.75]{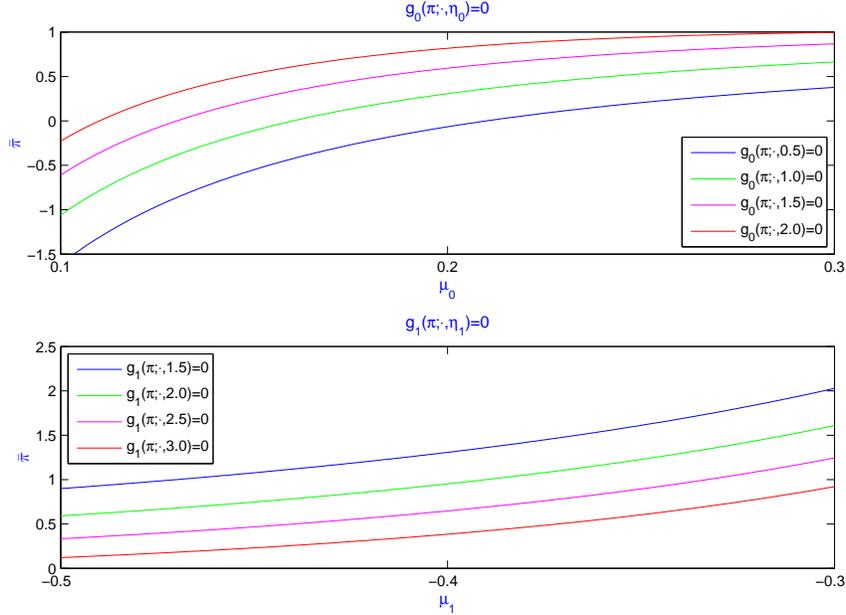}
\caption{\small Plots of optimal $\bar{\pi}_i$ as function of $\mu_i$, for different values of $\eta_i$.}
\end{figure}

\end{example}

\subsection{Fractional power utility}

Finally, we consider CRRA fractional power utility functions $U_1(t,x)=U_2(x)=\frac{x^\alpha}{\alpha}$ with $\alpha\in(0,1)$ fixed.


\begin{lemma}\label{betapower}
Let $\varphi\in\tilde{\Theta}$ be such that, for all $i\in \mathcal{I},$
\begin{equation}\label{varphilambda}
\int_E\left[\varphi_i(y)^{\frac{\alpha}{\alpha-1}}-\frac{\alpha}{\alpha-1}\varphi_i(y)\right]\,F_i(dy)=\frac{\alpha}{\alpha-1}\left(\frac{r_i-\lambda_i}{\lambda_i}\right).
\end{equation}
Then, for all $x>0,$ we have
\[
\beta^{x,\varphi}(t,y)= Y_{t-}^{x,\varphi}\bigl(\varphi_{\eps(t-)}(y)^{\frac{\alpha}{\alpha-1}}-1\bigr), \ \ \rho\mbox{-a.e.} \  (t,y)\in [0,T]\times E.
\]
\end{lemma}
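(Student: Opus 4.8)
The plan is to run the same reduction as in Lemma \ref{betalog}, but now with a non-trivial representation coefficient. First I would specialise the dual objects: since $U_1'(t,x)=U_2'(x)=x^{\alpha-1}$, the inverses are $I_1(t,y)=I_2(y)=y^{1/(\alpha-1)}$, so with $q:=\alpha/(\alpha-1)$ one finds $\calX_i^\varphi(y)=y^{1/(\alpha-1)}A_i^\varphi$, where $A_i^\varphi:=\Exp_i\bigl[\int_0^T(H_t^\varphi)^{q}\,dt+(H_T^\varphi)^{q}\bigr]$ is finite because $\varphi\in\tilde\Theta$. Inverting gives $\Y_i^\varphi(x)=(x/A_i^\varphi)^{\alpha-1}$, and therefore, writing $R_t:=(H_t^\varphi)^{q}$,
\[
H_t^\varphi c_t^{x,\varphi}=\tfrac{x}{A_i^\varphi}R_t,\qquad H_T^\varphi G^{x,\varphi}=\tfrac{x}{A_i^\varphi}R_T,\qquad M_t^{x,\varphi}=\tfrac{x}{A_i^\varphi}\,\Exp\Bigl[R_T+\int_0^TR_s\,ds\,\Big|\,\calF_t\Bigr]
\]
(so that $Y_t^{x,\varphi}$ is given by the same expression with $\int_0^T$ replaced by $\int_t^T$). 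Thus everything reduces to the martingale representation of $\Exp[R_T+\int_0^TR_s\,ds\mid\calF_t]$.

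Next I would identify the dynamics of $R_t$. Raising the explicit Dol\'{e}ans-Dade form of $H^\varphi$ recorded just before Theorem \ref{main} to the power $q$ gives $dR_t=R_{t-}\bigl[qg_{\eps(t-)}\,dt+\int_E(\varphi_{\eps(t-)}(y)^{q}-1)\,\gamma(dy,dt)\bigr]$ with $g_i:=(1-h_i^\varphi)\lambda_i-r_i$; compensating the jump part turns the drift coefficient into $qg_i+\lambda_i\int_E\varphi_i(y)^{q}\,F_i(dy)-\lambda_i$, and the hypothesis (\ref{varphilambda}) is precisely the identity $qg_i+\lambda_i\int_E\varphi_i(y)^{q}\,F_i(dy)=0$, so it collapses the drift to $-\lambda_i$ and leaves the clean linear SDE
\[
dR_t=R_{t-}\Bigl[-\lambda_{\eps(t-)}\,dt+\int_E\bigl(\varphi_{\eps(t-)}(y)^{q}-1\bigr)\,\tilde\gamma(dy,dt)\Bigr],\qquad R_0=1.
\]
I would then introduce the candidate process $\hat Y$ as the unique solution of $d\hat Y_t=-H_t^\varphi c_t^{x,\varphi}\,dt+\hat Y_{t-}\int_E(\varphi_{\eps(t-)}(y)^{q}-1)\,\tilde\gamma(dy,dt)$, $\hat Y_0=x$, and set $\hat M_t:=\hat Y_t+\int_0^tH_s^\varphi c_s^{x,\varphi}\,ds$. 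By construction $d\hat M_t=\hat Y_{t-}\int_E(\varphi_{\eps(t-)}(y)^{q}-1)\,\tilde\gamma(dy,dt)$, so $\hat M$ is a local martingale --- a true martingale once the integrability packaged into $\tilde\Theta$ is invoked. If one can show $\hat Y_T=H_T^\varphi G^{x,\varphi}$ a.s., then $\hat M_T=M_T^{x,\varphi}$, hence $\hat M_t=\Exp[\hat M_T\mid\calF_t]=M_t^{x,\varphi}$ and $\hat Y_t=Y_t^{x,\varphi}$ up to modification; comparing $\hat M_t=x+\int_0^t\int_E Y_{s-}^{x,\varphi}(\varphi_{\eps(s-)}(y)^{q}-1)\,\tilde\gamma(dy,ds)$ with $M_t^{x,\varphi}=x+\int_0^t\int_E\beta^{x,\varphi}(s,y)\,\tilde\gamma(dy,ds)$ and using essential uniqueness of the representation coefficient then gives $\beta^{x,\varphi}(t,y)=Y_{t-}^{x,\varphi}(\varphi_{\eps(t-)}(y)^{q}-1)$, $\rho$-a.e.

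The step I expect to be the main obstacle is the terminal identification $\hat Y_T=H_T^\varphi G^{x,\varphi}$. Variation of constants writes $\hat Y_t=D_t\bigl(x-\int_0^t D_{s-}^{-1}H_s^\varphi c_s^{x,\varphi}\,ds\bigr)$ with $D_t:=\mathcal{E}_t\bigl(\int_0^{\cdot}\int_E(\varphi_{\eps(s-)}(y)^{q}-1)\,\tilde\gamma(dy,ds)\bigr)$, and matching this at $t=T$ against $(x/A_i^\varphi)R_T$ forces one to re-express $D_T$ through $R_T$ (using (\ref{varphilambda}) a second time) and to use the exact value of the normalising constant $A_i^\varphi$. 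An equivalent route that avoids $\hat Y$ is to verify directly that $Y^{x,\varphi}$ jumps multiplicatively, $Y_{\tau_n}^{x,\varphi}=\varphi_{\eps_n}(Y_{\eps_n,n})^{q}\,Y_{\tau_n-}^{x,\varphi}$ at each transition time, by applying the strong Markov property of the pair $(R,\eps)$ to the conditional-expectation formula for $Y^{x,\varphi}$ displayed above. Either way this is the point where (\ref{varphilambda}) must be used in full strength rather than only to tidy the drift of $R$, and it is the computational core of the argument.
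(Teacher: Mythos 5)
Your setup follows the same outline as the paper's, but you stop exactly where the paper's proof does all of its work, and the way you have set things up that remaining step is not a routine computation --- it is where the argument breaks down. The paper's proof rests on a single observation: hypothesis (\ref{varphilambda}) is there to make $(H^\varphi)^{\frac{\alpha}{\alpha-1}}$ an $\Fil$-martingale, i.e.\ to annihilate the \emph{entire} compensated drift, not just part of it. Once that is known, everything collapses: $\calX_i^\varphi(y)=y^{\frac{1}{\alpha-1}}(T+1)$, so $A_i^\varphi=T+1$; $Y_t^{x,\varphi}=x(H_t^\varphi)^{\frac{\alpha}{\alpha-1}}\bigl(1-\frac{t}{T+1}\bigr)$ explicitly; and a single application of the product rule gives $dM_t^{x,\varphi}=Y_{t-}^{x,\varphi}\int_E\bigl(\varphi_{\eps(t-)}(y)^{\frac{\alpha}{\alpha-1}}-1\bigr)\,\tilde{\gamma}(dy,dt)$, after which the conclusion follows from essential uniqueness of the representation coefficient. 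No backward verification, no auxiliary process $\hat Y$, no terminal identification is needed.

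You instead arrive at a residual drift $-\lambda_{\eps(t)}$. (Your arithmetic is correct for the hypothesis as literally printed: the constant $-\lambda_i$ produced by $\lambda_i\int_E\bigl(\varphi_i(y)^{\frac{\alpha}{\alpha-1}}-1\bigr)F_i(dy)$ is not absorbed by (\ref{varphilambda}) as displayed, so the stated condition differs by an additive constant from the one the paper's proof actually uses.) With that residual drift your program cannot be completed. Writing $R_t=e^{-\int_0^t\lambda_{\eps(s)}ds}D_t$, your variation-of-constants formula gives
\[
\hat Y_T=D_T\Bigl(x-\frac{x}{A_i^\varphi}\int_0^Te^{-\int_0^s\lambda_{\eps(u)}du}\,ds\Bigr),
\]
so the terminal identification $\hat Y_T=H_T^\varphi G^{x,\varphi}=\frac{x}{A_i^\varphi}R_T$ is equivalent to $A_i^\varphi=e^{-\int_0^T\lambda_{\eps(s)}ds}+\int_0^Te^{-\int_0^s\lambda_{\eps(u)}du}\,ds$, a deterministic constant equated with a path-dependent random variable --- impossible unless all the $\lambda_i$ coincide. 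Thus the step you defer as ``the computational core'' is in fact an impasse rather than a calculation to be finished. The missing idea is to read (\ref{varphilambda}) as the condition that kills the full drift of $(H^\varphi)^{\frac{\alpha}{\alpha-1}}$; with that reading the lemma is a two-line consequence of the martingale property, and your entire second half becomes unnecessary.
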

\begin{proof}
Notice first that under condition (\ref{varphilambda}) the process $(H^\varphi)^{\frac{\alpha}{\alpha-1}}$ is a $\Fil$-martingale. Indeed, using integration formula for Marked point processes, we have
\begin{align*}
d\left[(H_t^\varphi)^{\frac{\alpha}{\alpha-1}}\right]
&=(H_{t-}^\varphi)^{\frac{\alpha}{\alpha-1}}\left\{\frac{\alpha}{\alpha-1}\Bigl[-r_{\eps(t)}-\lambda_{\eps(t)}\int_E(\varphi_{\eps(t)}(y)-1)\,F_{\eps(t)}(dy)\Bigr]\,dt\right.\\
&\phantom{AAAAAAAAA}+\left.\int_E(\varphi_{\eps(t-)}(y)^{\frac{\alpha}{\alpha-1}}-1)\,\gamma(dy,dt)\right\}\\
&=(H_{t-}^\varphi)^{\frac{\alpha}{\alpha-1}} \int_E\bigl(\varphi_{\eps(t-)}(y)^{\frac{\alpha}{1-\alpha}}-1\bigr)\,\tilde\gamma(dy,dt).
\end{align*}
Then
\begin{equation*}
\calX_i^\varphi(y)=y^{\frac{1}{\alpha-1}}\Exp_i\left[\int_0^T (H_t^\varphi)^{\frac{\alpha}{\alpha-1}}\,dt+(H_T^\varphi)^{\frac{\alpha}{\alpha-1}}\right]=y^{\frac{1}{\alpha-1}}(T+1).
\end{equation*}
It follows that
\[
\Y_i^{\varphi}(x)=\Bigl(\frac{x}{T+1}\Bigr)^{\alpha-1}, \quad x>0, \ \ i=0,1.
\]
and
\begin{align}
  c_t^{x,\varphi} &=\frac{x}{T+1}(H_t^\varphi)^{\frac{1}{\alpha-1}}, \ \ t\in [0,T],\label{cpower}\\
  G^{x,\varphi}&=\frac{x}{T+1}(H_T^\varphi)^{\frac{1}{\alpha-1}}.\notag
\end{align}
Hence
\[
Y_t^{x,\varphi}=\frac{x}{T+1}\Bigl[(H_t^\varphi)^{\frac{\alpha}{\alpha-1}}+\int_t^T(H_t^\varphi)^{\frac{\alpha}{\alpha-1}}\,ds\Bigr]=x (H_t^\varphi)^{\frac{\alpha}{\alpha-1}}\Bigl(1-\frac{t}{T+1}\Bigr)
\]
and
\[
M_t^{x,\varphi}=x (H_t^\varphi)^{\frac{\alpha}{\alpha-1}}\Bigl(1-\frac{t}{T+1}\Bigr)+\int_0^t\frac{x}{T+1}(H_s^\varphi)^{\frac{\alpha}{\alpha-1}}\,ds.
\]
Therefore, the differential of $M^{x,\varphi}$ satisfies
\begin{align*}
dM_t^{x,\varphi}&=x\left[\frac{-1}{T+1}(H_t^\varphi)^{\frac{\alpha}{\alpha-1}}\,dt+\Bigl(1-\frac{t}{T+1}\Bigr)\,d\left[(H_t^\varphi)^{\frac{\alpha}{\alpha-1}}\right]
+\frac{1}{T+1}(H_t^\varphi)^{\frac{\alpha}{\alpha-1}}\,dt\right]\\
&=x\Bigl(1-\frac{t}{T+1}\Bigr)(H_{t-}^\varphi)^{\frac{\alpha}{\alpha-1}} \int_E\bigl(\varphi_{\eps(t-)}(y)^{\frac{\alpha}{1-\alpha}}-1\bigr)\,\tilde\gamma(dy,dt)\\
&=Y_{t-}^{x,\varphi} \int_E\bigl(\varphi_{\eps(t-)}(y)^{\frac{\alpha}{1-\alpha}}-1\bigr)\,\tilde\gamma(dy,dt)
\end{align*}
and the desired result follows.
\end{proof}

\begin{theorem}\label{thmpower}
Let $x$ be fixed. Suppose  Assumption \ref{A1} holds  and that for each $i\in\calI,$ there exists $\bar{\pi}_i$ satisfying $1+\bar{\pi}_if_i(y)>0$ for all $y\in E,$
\begin{equation}\label{pif3}
\mu_i-r_i+\lambda_i\int_E \frac{f_i(y)}{[1+\bar\pi_i f_i(y)]^{1-\alpha}}\,F_i(dy)=0
\end{equation}
and
\begin{equation}\label{pif4}
\int_E\Bigl([1+\bar\pi_i f_i(y)]^{\alpha}-\frac{\alpha}{\alpha-1}[1+\bar\pi_i f_i(y)]^{\alpha-1}\Bigr)\,F_i(dy)=\frac{\alpha}{\alpha-1}\left(\frac{r_i-\lambda_i}{\lambda_i}\right).
\end{equation}
Let $\hat{\pi}_t:=\bar\pi_{\eps(t-)}$ and
\[
\hat{c}_t:=\frac{x}{T+1}\exp\left(\frac{-1}{\alpha}\int_0^t\int_E\left[1+\bar\pi_{\eps(s)}f_{\eps(s)}(y)\right]^\alpha F_{\eps(s)}(dy)\,\lambda_{\eps(s)}ds\right)
\prod_{n=1}^{N_t(E)}\left(1+\bar\pi_{\eps_n}f_{\eps_n}(Y_{\eps_n,n})\right)
\]
for $t\in [0,T].$ Then the portfolio-consumption pair $(\hat{\pi},\hat{c})$ is optimal for $U_1(t,x)=U_2(x)=\frac{x^\alpha}{\alpha}.$
\end{theorem}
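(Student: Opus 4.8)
The plan is to apply Theorem \ref{main} with a dual density $\hat\varphi$ chosen so that the optimality relation (\ref{pif}) holds by construction. Combining (\ref{pif}) with the martingale representation coefficient supplied by Lemma \ref{betapower} forces the choice
\[
\hat\varphi_i(y):=\bigl[1+\bar\pi_i f_i(y)\bigr]^{\alpha-1},\qquad i\in\calI .
\]
First I would record the two algebraic identities that make this work. Since $\alpha-1=-(1-\alpha)$, condition (\ref{pif3}) is exactly the martingale condition (\ref{martcond}) written for $\hat\varphi$; and since $\hat\varphi_i(y)^{\alpha/(\alpha-1)}=[1+\bar\pi_i f_i(y)]^{\alpha}$ while $\hat\varphi_i(y)=[1+\bar\pi_i f_i(y)]^{\alpha-1}$, condition (\ref{pif4}) is exactly condition (\ref{varphilambda}) of Lemma \ref{betapower} for $\hat\varphi$. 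Granting that $Z^{\hat\varphi}$ is a genuine $\Fil$-martingale (so that $\hat\varphi\in\Theta$) — for $m=2$ this follows from Remark \ref{remZ:m2}, since $\int_E\hat\varphi_i\,dF_i<\infty$ by (\ref{pif4}) — Lemma \ref{betapower} applies and gives
\[
\beta^{x,\hat\varphi}(t,y)=Y_{t-}^{x,\hat\varphi}\bigl(\hat\varphi_{\eps(t-)}(y)^{\frac{\alpha}{\alpha-1}}-1\bigr),
\]
and its proof also shows $\calX_i^{\hat\varphi}(y)=y^{1/(\alpha-1)}(T+1)<\infty$, so in fact $\hat\varphi\in\tilde{\Theta}$.

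Next I would check the optimality condition (\ref{pif}) for $(\hat\pi,\hat\varphi)$ with $\hat\pi_t=\bar\pi_{\eps(t-)}$. Using the displayed expression for $\beta^{x,\hat\varphi}$,
\[
\frac{1}{\hat\varphi_{\eps(t-)}(y)}\Bigl[\frac{\beta^{x,\hat\varphi}(t,y)}{Y_{t-}^{x,\hat\varphi}}+1\Bigr]
=\hat\varphi_{\eps(t-)}(y)^{\frac{\alpha}{\alpha-1}-1}
=\hat\varphi_{\eps(t-)}(y)^{\frac{1}{\alpha-1}}
=1+\bar\pi_{\eps(t-)}f_{\eps(t-)}(y)=\hat\pi_t f_{\eps(t-)}(y)+1,
\]
which is precisely (\ref{pif}) — the candidate $\hat\varphi$ was reverse-engineered so that this holds.

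It then remains to identify the consumption rate produced by the duality machinery with the one stated in the theorem. By Lemma \ref{betapower}, $c_t^{x,\hat\varphi}=\tfrac{x}{T+1}(H_t^{\hat\varphi})^{1/(\alpha-1)}$. Raising the explicit formula for $H^{\hat\varphi}$ from Section 3 to the power $1/(\alpha-1)$ and using $\ln\hat\varphi_i(y)=(\alpha-1)\ln(1+\bar\pi_i f_i(y))$, the jump part becomes $\prod_{n=1}^{N_t(E)}(1+\bar\pi_{\eps_n}f_{\eps_n}(Y_{\eps_n,n}))$ and the drift part becomes $\exp\bigl(\tfrac{1}{\alpha-1}\int_0^t[(1-h_{\eps(s)}^{\hat\varphi})\lambda_{\eps(s)}-r_{\eps(s)}]\,ds\bigr)$; a one-line rearrangement of (\ref{pif4}), namely $h_i^{\hat\varphi}+\tfrac{r_i-\lambda_i}{\lambda_i}=\tfrac{\alpha-1}{\alpha}\int_E[1+\bar\pi_i f_i(y)]^{\alpha}F_i(dy)$, turns the drift into $\exp\bigl(-\tfrac1\alpha\int_0^t\lambda_{\eps(s)}\int_E[1+\bar\pi_{\eps(s)}f_{\eps(s)}(y)]^{\alpha}F_{\eps(s)}(dy)\,ds\bigr)$, which is exactly the stated $\hat c_t$. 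Finally, $\hat\pi$ is bounded and $\Fil$-predictable with $1+\bar\pi_i f_i>0$, and $\hat c$ is a.s. bounded on $[0,T]$ (only finitely many jumps occur), so the wealth equation (\ref{eqV}) admits a solution for $(\hat\pi,\hat c)$; Theorem \ref{main}(a) then yields that $(\hat\pi,c^{x,\hat\varphi})=(\hat\pi,\hat c)$ is optimal.

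The one genuinely delicate point — the main obstacle — is justifying that $Z^{\hat\varphi}$ and $(H^{\hat\varphi})^{\alpha/(\alpha-1)}$ are true martingales rather than merely local ones, which is what makes $\hat\varphi\in\tilde{\Theta}$ and validates the finiteness of $\calX_i^{\hat\varphi}$ used above. For $m=2$ this is covered by Theorem \ref{teor:expHQ} and Remark \ref{remZ:m2} applied to the jump-telegraph processes $L_t=\sum_n([1+\bar\pi_{\eps_n}f_{\eps_n}(Y_{\eps_n,n})]^{\alpha}-1)-\int_0^t\int_E(\cdots)$ and $J_t$, using the integrability furnished by (\ref{pif4}); in the general $m$-state case it requires an additional $L^p$-type hypothesis on $[1+\bar\pi_i f_i]^{\alpha-1}$, in the spirit of Theorem \ref{thmlog}. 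Everything else is routine bookkeeping with the Dol\'{e}ans-Dade exponential formulas.
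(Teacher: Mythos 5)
Your proposal is correct and follows essentially the same route as the paper: choosing $\hat\varphi_i(y)=[1+\bar\pi_i f_i(y)]^{\alpha-1}=1/[1+\bar\pi_i f_i(y)]^{1-\alpha}$, identifying (\ref{pif3}) with the martingale condition (\ref{martcond}) and (\ref{pif4}) with condition (\ref{varphilambda}) of Lemma \ref{betapower}, verifying (\ref{pif}), and then matching $\hat c$ with $c^{x,\hat\varphi}$ via (\ref{cpower}) and (\ref{pif4}). You spell out several steps the paper leaves implicit (the explicit check of (\ref{pif}), the rearrangement of (\ref{pif4}) that produces the stated drift in $\hat c_t$, and the integrability needed to make $Z^{\hat\varphi}$ a true martingale for general $m$), but the argument is the same one.
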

\begin{proof}
Define $\hat\varphi_i(y):=1/[1+\bar{\pi}_if_i(y)]^{1-\alpha}.$ By (\ref{pif3}) and Remark \ref{martcondZ}, the process $\hat{\varphi}$ belongs to $\tilde{\Theta}.$ By Lemma \ref{betapower}, $\hat\varphi$ and $\hat\pi$ satisfy the assumptions of Theorem \ref{main}. Then the pair $(\hat\pi,c^{x,\hat{\varphi}})$ is optimal. Using (\ref{cpower}) and (\ref{pif4}),  we conclude that $\hat{c}=c^{x,\hat{\varphi}}$ and the desired result follows.
\end{proof}

\begin{remark}
Notice that both conditions (\ref{pif3}) and (\ref{pif4}) turn into a very specific constraint on the mean rate of return $\mu_i.$ Indeed, if there exists $\bar\pi_i$ satisfying (\ref{pif4}), then $\mu_i$ must satify condition (\ref{pif3}) for this value of $\bar\pi_i.$

On the other hand, if condition (\ref{pif3}) holds, it can be plugged into (\ref{pif4}) to obtain
\begin{equation*}
\int_E \frac{1}{[1+\bar\pi_i f_i(y)]^{1-\alpha}}\,F_i(dy)=\frac{1}{\lambda_i}[\bar\pi_i(1-\alpha)(\mu_i-r_i)+\alpha(\lambda_i-r_i)], \ \ i=0,1.
\end{equation*}
\end{remark}

To conclude, we have the following result for the two-regime case, which follows easily from (\ref{Vpi0}), Theorem \ref{thm:expZG} and Theorem \ref{betapower}.
\begin{corollary}
Let $m=2.$ Assume that consumption is not allowed i.e. $c_t=0$ for all $t\in [0,T],$ and for each $i=0,1$ there exists $\bar{\pi}_i$ satisfying $1+\bar{\pi}_if_i(y)>0$ for all $y\in E$ as well as conditions (\ref{pif3}) and (\ref{pif4}).
Then the optimal value functions $\vartheta_i(x),$ $i=0,1,$ satisfy
\[
\vartheta_0(x)=\frac{x^\alpha}{\alpha}e^{T(\bar\nu-\bar\lambda)}\Bigl[\cosh\bigl(T\sqrt{D}\,\bigr)
+\frac{\bar\mu-\bar\zeta+\lambda_0\phi_0}{\sqrt{D}}\sinh\bigl(T\sqrt{D}\,\bigr)\Bigr]
\]
and
\[
\vartheta_1(x)=\frac{x^\alpha}{\alpha}e^{T(\bar\nu-\bar\lambda)}\Bigl[\cosh\bigl(T\sqrt{D}\,\bigr)
-\frac{\bar\mu-\bar\zeta-\lambda_1\phi_1}{\sqrt{D}}\sinh\bigl(T\sqrt{D}\,\bigr)\Bigr]
\]
where
\begin{align*}
\bar\nu &:=\frac{\alpha}{2}\left[\bar\pi_0\mu_0+(1-\bar\pi_0)r_0+\bar\pi_1\mu_1+(1-\bar\pi_1)r_1\right]\\
\bar\mu &:=\frac{\alpha}{2}\left[\bar\pi_0\mu_0+(1-\bar\pi_0)r_0-\bar\pi_1\mu_1-(1-\bar\pi_1)r_1\right]\\
\bar\lambda &:=\frac{\lambda_0+\lambda_1}{2}\\
\bar\zeta &:=\frac{\lambda_0-\lambda_1}{2}\\
\phi_i&=\int_E[1+\bar\pi_i f_i(y)]^\alpha\,F_i(dy), \ \ i=0,1,
\end{align*}
and $D=(\bar\mu-\bar\zeta)^2+\lambda_0\lambda_1\phi_0\phi_1$.

\end{corollary}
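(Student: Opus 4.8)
The plan is to combine Theorem~\ref{thmpower} (to pin down the optimal policy), the explicit formula~(\ref{Vpi0}) for the wealth with constant‑per‑regime proportions, and Theorem~\ref{thm:expZG} (to evaluate the resulting expectation).

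First I would argue that $\hat\pi_t=\bar\pi_{\eps(t-)}$ is optimal and that the optimal terminal wealth is $V_T^{x,\hat\pi,0}$. This is the pure‑investment specialisation of Theorem~\ref{thmpower}: with consumption switched off the admissible pairs are $(\pi,0)$, the budget inequality~(\ref{budget}) reads $\Exp_i[H_T^\varphi V_T^{x,\pi,0}]\le x$, and the auxiliary functional keeps only its terminal term, $\calX_i^\varphi(y)=y^{\frac1{\alpha-1}}\Exp_i[(H_T^\varphi)^{\frac\alpha{\alpha-1}}]$. Taking $\hat\varphi_i(y)=[1+\bar\pi_i f_i(y)]^{-(1-\alpha)}$, condition~(\ref{pif3}) is exactly the martingale condition~(\ref{martcond}) for $\hat\varphi$, so $\hat\varphi\in\Theta$, and condition~(\ref{pif4}) is precisely~(\ref{varphilambda}) for this $\hat\varphi$, so by the computation in the proof of Lemma~\ref{betapower} the process $(H^{\hat\varphi})^{\frac\alpha{\alpha-1}}$ is a $\Fil$‑martingale; hence $\calX_i^{\hat\varphi}(y)=y^{\frac1{\alpha-1}}$, $\Y_i^{\hat\varphi}(x)=x^{\alpha-1}$, $G^{x,\hat\varphi}=x(H_T^{\hat\varphi})^{\frac1{\alpha-1}}$ and $\beta^{x,\hat\varphi}(t,y)=Y_{t-}^{x,\hat\varphi}\bigl(\hat\varphi_{\eps(t-)}(y)^{\frac\alpha{\alpha-1}}-1\bigr)$. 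Since $\hat\varphi_i(y)^{\frac1{\alpha-1}}=1+\bar\pi_i f_i(y)$, the right‑hand side of~(\ref{pif}) collapses to $1+\bar\pi_{\eps(t-)}f_{\eps(t-)}(y)$, so~(\ref{pif}) holds with $\hat\pi_t=\bar\pi_{\eps(t-)}$, and Theorem~\ref{main}(a),(b) gives that $\hat\pi$ is optimal with optimal wealth process $Y_t^{x,\hat\varphi}/H_t^{\hat\varphi}=x(H_t^{\hat\varphi})^{\frac1{\alpha-1}}$, which by~(\ref{Vpi0}) is a modification of $V_t^{x,\hat\pi,0}$. Consequently $\vartheta_i(x)=\Exp_i[U_2(V_T^{x,\hat\pi,0})]=\tfrac1\alpha\Exp_i[(V_T^{x,\hat\pi,0})^\alpha]$.

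Next I would put this expectation into jump‑telegraph form. Raising~(\ref{Vpi0}) with $\pi_t=\bar\pi_{\eps(t-)}$ to the power $\alpha$ and using $1+\bar\pi_i f_i(y)>0$, one gets $(V_T^{x,\hat\pi,0})^\alpha=x^\alpha e^{X_T^{\#}}$, where $X^{\#}$ is the jump‑telegraph process of type~(\ref{Tt}) with alternating tendencies $\mu_i^{\#}:=\alpha[\bar\pi_i\mu_i+(1-\bar\pi_i)r_i]$ and Markov‑modulated jump maps $f_i^{\#}(y):=\alpha\ln(1+\bar\pi_i f_i(y))$, $i=0,1$. Then I would apply Theorem~\ref{thm:expZG} to $X^{\#}$: its hypothesis holds since $\int_E e^{f_i^{\#}(y)}F_i(dy)=\int_E[1+\bar\pi_i f_i(y)]^\alpha F_i(dy)=\phi_i<\infty$, finiteness being part of~(\ref{pif4}). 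Translating the constants of Theorem~\ref{thm:expZG} to $X^{\#}$ gives $\mu\mapsto\bar\mu$, $\nu\mapsto\bar\nu$, $\zeta\mapsto\bar\zeta$, $\lambda\mapsto\bar\lambda$, the same $\phi_i$, and $D=(\bar\mu-\bar\zeta)^2+\lambda_0\lambda_1\phi_0\phi_1$; substituting the formulae~(\ref{expZG}) for $\psi_i(T)=\Exp_i[e^{X_T^{\#}}]$ and multiplying by $x^\alpha/\alpha$ yields precisely the asserted expressions for $\vartheta_0(x)$ and $\vartheta_1(x)$.

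The one delicate point is Step~1: checking that the duality framework of Section~3 and Theorem~\ref{main} transfer verbatim to the setting with no consumption, so that the optimal terminal wealth is genuinely $V_T^{x,\hat\pi,0}$ rather than the consumption‑depleted wealth appearing in Theorem~\ref{thmpower}. Once that is secured, Steps~2 and~3 are routine, namely recognising $(V_T^{x,\hat\pi,0})^\alpha$ as $x^\alpha$ times the exponential of a jump‑telegraph process and matching parameters in Theorem~\ref{thm:expZG}.
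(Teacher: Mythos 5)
Your proposal is correct and follows essentially the same route as the paper, which simply asserts that the corollary ``follows easily from (\ref{Vpi0}), Theorem \ref{thm:expZG} and Theorem \ref{betapower}'': identify the optimal terminal wealth as $V_T^{x,\hat\pi,0}$, recognise $\alpha\ln\bigl(V_T^{1,\hat\pi,0}\bigr)$ as a jump-telegraph process with tendencies $\alpha[\bar\pi_i\mu_i+(1-\bar\pi_i)r_i]$ and jumps $\alpha\ln(1+\bar\pi_i f_i(y))$, and read off its exponential moments by matching parameters in (\ref{expZG}). The adaptation of the duality framework of Section~3 to the no-consumption case, which you rightly flag as the one delicate step, is likewise left implicit in the paper, so your attempt matches the intended argument in both substance and level of detail.
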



\end{document}